\newtheorem{theorem}{Theorem}
\newtheorem{lemma}{Lemma}
\newcommand{\GAT}[0]{\ensuremath{\mathsf{GAT}}}
\newcommand{\GST}[0]{\ensuremath{\mathsf{GST}}}
\newcommand{\genesis}[0]{\ensuremath{\Lambda_{\mathsf{g}}}}
\newcommand{\sign}[2]{\ensuremath{{#1}_{#2}}}
\newcommand{\msg}[1]{\ensuremath{\langle #1 \rangle}}
\newcommand{\msgnamge}[1]{\ensuremath{\mathsf{#1}}}
\newcommand{\logmsgname}[0]{\ensuremath{\msgnamge{LOG}}}
\newcommand{\logmsg}[1]{\ensuremath{\msg{\logmsgname, #1}}}
\definecolor{pink}{rgb}{1,0.75,0.8} 
\begin{document}

\title{TOB-SVD: Total-Order Broadcast with Single-Vote Decisions in the Sleepy Model}

\author{Francesco D'Amato\\
  Ethereum Foundation\\
  \url{francesco.damato@ethereum.org}
  \and
  Roberto Saltini\\
  Ethereum Foundation\\
  \url{roberto.saltini@ethereum.org}
  \and
  Thanh-Hai Tran\\
  Independent Researcher\\
  \url{thanhhai1302@gmail.com}
  \and
  Luca Zanolini\\
  Ethereum Foundation\\
  \url{luca.zanolini@ethereum.org}
}
\date{}


\maketitle
\thispagestyle{plain}
\pagestyle{plain}

\begin{abstract}
Over the past years, distributed consensus research has expanded its focus to address challenges in large-scale, permissionless systems, such as blockchains. This shift reflects the need to accommodate dynamic participation, in contrast to the traditional model of a static set of continuously online validators. Works like Bitcoin and the sleepy model have laid the groundwork for this evolving framework.

Notable contributions by Momose and Ren (CCS 2022), along with subsequent research, have introduced Total-Order Broadcast protocols that leverage Graded Agreement primitives and support dynamic participation. However, these approaches often require multiple phases of voting per decision, which can create a bottleneck for real-world, large-scale systems.

To address this, our paper introduces TOB-SVD, a novel Total-Order Broadcast protocol in the sleepy model that is resilient to up to 1/2 Byzantine validators.
TOB-SVD requires only a single phase of voting per decision in the best case and achieves lower expected latency compared to existing approaches offering the same optimal adversarial resilience.
This work paves the way for more practical Total-Order Broadcast protocols that can be implemented in real-world systems involving a large number of validators with fluctuating participation over time.
\end{abstract}

\section{Introduction}
\label{sec:intro}
Distributed consensus research has expanded in recent years to address the challenges posed by large-scale, permissionless systems, such as blockchains. In contrast to traditional methods for solving consensus, which assume validators to be continuously online and actively contributing to the protocol, these new consensus protocols aim at accommodating dynamicity in the participation level among network validators~\cite{sleepy}. We refer to consensus (or, in our context, Total-Order Broadcast) protocols supporting dynamic participation as \emph{dynamically available}.

Momose and Ren's research~\cite{mr} laid the foundation for dynamically available Total-Order Broadcast (TOB) protocols with deterministic Safety, sparking a series of subsequent research~\cite{mmr, mmr2, gl, DBLP:conf/podc/DAmatoLZ24}. Notably, the protocols stemming from these works share a common structural theme: they all leverage a \emph{Graded Agreement}~(GA) primitive\footnote{In Graded Agreement, a grade is associated with any decision where the grade intuitively indicates how strong the agreement is.}~\cite{DBLP:conf/opodis/AttiyaW23}, albeit with different implementations and properties. For example, the Total-Order Broadcast protocol proposed by Momose and Ren~\cite{mr} (MR) utilizes a Graded Agreement protocol resilient to 1/2 Byzantine validators. However, their TOB protocol is rather impractical due to high latency. In contrast, the subsequent work by Malkhi, Momose, and Ren~\cite{mmr} (MMR) proposes two protocols (1/3MMR and 1/4MMR) which improve latency but lower the adversarial tolerance. A later refinement by the same authors~\cite{mmr2} (MMR2) reverts to tolerating minority corruption, while maintaining comparable latency. A concurrent and independent work~\cite{gl} (GL) also uses a Graded Agreement primitive (called Commit-Adopt in their work).\footnote{To be precise, GL employs a primitive called Commit-Adopt, which slightly differs from the Graded Agreement (GA) primitive~\cite{DBLP:conf/opodis/AttiyaW23}. For simplicity, we will refer to it as GA in this work, although it is important to note that there are differences between the two primitives.}

The three currently existing dynamically available protocols with optimal adversarial resilience and latency constant in the security parameter, i.e.,~\cite{mr, mmr2, gl}, all share the drawback of requiring multiple {phases} of voting\footnote{A voting phase is a point in time when every honest validator performs computations based on the messages they have received and subsequently sends a \emph{new} message. Here, ``new" refers to a message that has never been sent before, with its content determined by these computations.} for each {new block} (i.e., decision).
In practice, these protocols all operate in \emph{views}, and within each view a \emph{block} is proposed and a decision is made. To do so, two~\cite{gl}, three~\cite{mmr2}, and five~\cite{mr} instances of Graded Agreement are invoked within a view, each including four, three, and two {phases} of voting, respectively. This poses a challenge to their practical application in real-world systems, particularly in scenarios involving a multitude of simultaneous validators, such as those found in most of existing blockchains. 
{In fact, a key factor for the practical application in such systems is \emph{reducing the number of voting phases {per new block} as much as possible}.
This is because of various reasons.
First, }due to the large number of validators involved, voting phases are computationally intensive which results in an extra computation delay added on top of the network latency.
{Additionally, these protocols often require a signature aggregation process where messages are first sent to aggregators who then distribute the aggregated signatures, causing voting phases to require double the normal network latency.}
For instance, the Ethereum's consensus protocol divides its \emph{slot} -- the unit of time in which a new block is proposed and voted -- in three phases taking up $\Delta = 4s$ each. One $\Delta$ is for \emph{block} propagation and two $\Delta$ are for \emph{vote} propagation, one of which is reserved to propagation and aggregation of votes, and one to propagation of aggregated votes. Effectively, a voting phase in Ethereum takes $2\Delta$\cite{specs}. 

We extend this line of work by improving the practicality of dynamically available protocols with deterministic Safety. Specifically, our main contributions include:
\begin{enumerate}
    \item \textbf{Introduction of a dynamically available Total-Order Broadcast protocol} that tolerates up to 1/2 Byzantine validators.
    \item \textbf{Latency comparable to the state-of-the-art MMR2 protocol}~\cite{mmr2} (Table~\ref{tab:comparison}), with:
    \begin{itemize}
        \item Slightly better performance in the {expected case}.
        \item Slightly worse performance in the best case.
    \end{itemize}
    \item {\textbf{Better suited for large-scale permissionless systems} by lowering the number of voting phases {per new block} to one {in the best case, and to two in the expected case}.}
\end{enumerate}

Expanding on point 3, a recent research direction demonstrates how combining a dynamically available total-order broadcast protocol with a \emph{finality gadget} -- a partially synchronous total-order broadcast protocol -- can produce a secure \emph{ebb-and-flow} protocol~\cite{DBLP:conf/sp/NeuTT21}. Specifically, an ebb-and-flow protocol consists of two components: (1) a dynamically available total-order broadcast protocol that guarantees safety and liveness under synchrony ($\GST = 0$) at all times, and (2) a finalizing protocol that ensures safety at all times and liveness after $\max(\GST, \GAT)$, where $\GAT$ represents a global awake time after which all honest validators come online. 

Practical research has further explored~\cite{DBLP:conf/esorics/DAmatoZ23}, within the context of the Ethereum consensus protocol, how to combine a probabilistically safe dynamically available total-order broadcast protocol~\cite{rlmd} and a finality gadget in order to achieve \emph{single-slot finality}~\cite{buterin2024single}. Building on this, we \emph{strongly} believe that similar results can be achieved by replacing their dynamically available protocol with the protocol presented in this work, potentially reducing message sizes.

The remainder of this work is structured as follows. We present and compare related work in Section~\ref{sec:related-work}. Section~\ref{sec:model} presents the system model and essential definitions. Importantly, we revisit the sleepy model~\cite{sleepy}, expanding upon the notation first introduced by Malkhi, Momose, and Ren~\cite{mmr2}. Section~\ref{sec:background} recalls a foundational result that serves as the starting point for our protocol. Section~\ref{sec:tob-svd} details our contributions. Specifically, Section~\ref{sec:2valGA} and Section~\ref{sec:3valGA} introduce two Graded Agreement protocols: the first, a two-grade GA protocol, lays the groundwork for the more advanced three-grade GA protocol. Section~\ref{sec:our-tob} then presents TOB-SVD, our dynamically available Total-Order Broadcast protocol. We provide a comprehensive analysis of both Graded Agreement protocols and TOB-SVD in Section~\ref{sec:analysis}. Finally, Section~\ref{sec:conclusions} concludes the paper.

\section{Related Work}
\label{sec:related-work}

\begin{table*}
    \centering
    \scriptsize
    \renewcommand{\arraystretch}{1.5}
    \begin{tabular}{|c|c|c|c|c|c|c|}
        \hline
        & \rotatebox{90}{TOB-SVD} & \rotatebox{90}{MR~\cite{mr}} & \rotatebox{90}{MMR2~\cite{mmr2}} & \rotatebox{90}{GL~\cite{gl}} & \rotatebox{90}{1/3MMR~\cite{mmr}} & \rotatebox{90}{1/4MMR~\cite{mmr}} \\ \hline  
        Adversarial resilience & $1/2$ & $1/2$ &  $1/2$ & $1/2$ & $1/3$ & $1/4$ \\ \hline 
        Best-case latency & $6\Delta$ &  $16\Delta$ & $4\Delta$ & $10\Delta$ & $3\Delta$ & $2\Delta$ \\ \hline 
        Expected latency & $10\Delta$ & $32\Delta$ & $14\Delta$ & $20\Delta$ & $6\Delta$ & $4\Delta$ \\ \hline 
        Transaction expected latency & $12\Delta$ & $50.5\Delta$ & $19\Delta$ & $25\Delta$ & $7.5\Delta$ & $5\Delta$ \\ \hline 
        Voting phases per {new} block in the best case & $1$ & $10$ & $3$ & $5$  &  $2$ & $1$ \\ \hline 
        Voting phases per {new} block in the expected case & $2$ & $20$ & $12$ & $10$  &  $4$ & $2$ \\ \hline 
        {Communication complexity} & $O(Ln^3)$ & $O(Ln^3)$ & $O(Ln^3)$ & $O(Ln^3)$ & $O(Ln^2)$ & $O(Ln^2)$\\ \hline 
    \end{tabular}
    \vspace{0.3cm}
    \caption{\textnormal{Comparison of dynamically available TOB protocols with deterministic Safety, where $L$ represents the block size and $n$ refers to the number of validators.}}
    \label{tab:comparison}
\end{table*}

To provide a clearer motivation for our results, we first compare the outcomes of our work with those of other protocols. 
To facilitate this comparison, we introduce key metrics to help evaluate the performance and characteristics of protocols.

{In the following, we assume that upon submission, transactions are immediately added to a transaction pool} from which validators can retrieve and validate them using a specified validity predicate\footnote{For simplicity, we assume that the validity of a transaction is evaluated independently of any other transaction already included in the log. A transaction is valid according to a global, efficiently computable predicate $P$, known to all validators. The specific details of this predicate are omitted.} before batching them into blocks.
{These blocks are then appended to an existing sequence of other blocks, or a log (Section~\ref{sec:primitives}), and such a resulting log is proposed for decision by the consensus protocol.
We assume that honest validators batch into any proposed block any valid transaction included in the transaction pool that is not already included in the log that the proposed block is appended to.}

The \emph{confirmation time} of a transaction is defined as the time between its submission (from a user) and the decision regarding the log containing that transaction. 

We refer to \emph{best-case latency} as the minimum possible confirmation time of a transaction, where the minimum is taken over all possible submission times. In practice, this corresponds to the shortest time between a proposal and its decision.

{We refer to \emph{expected latency} as the \emph{expected} confirmation time of a transaction submitted right before the next proposal.}

{Finally, we refer to \emph{transaction expected latency} as the expected confirmation time of a transaction submitted at a \emph{random time}, which is equivalent to the sum of half of the time between consecutive proposals and expected latency.}

In Table~\ref{tab:comparison} we compare all existing dynamically available Total-Order Broadcast protocols that ensure deterministic Safety, where $L$ represents the block size and $n$ refers to the number of validators.

Analyzing best-case latency, the MR protocol~\cite{mr} proves impractical due to its high latency of approximately $16\Delta$\footnote{where $\Delta$ is the network delay bound}. In contrast, the 1/3MMR and 1/4MMR protocols~\cite{mmr} reduce best-case latency to $3\Delta$ and $2\Delta$, respectively, but at the cost of lowering adversarial tolerance to 1/3 and 1/4, compared to the 1/2 tolerance of the MR protocol. Maintaining the 1/2 adversarial tolerance, the MMR2 protocol~\cite{mmr2} achieves a best-case latency of $4\Delta$, while the GL protocol~\cite{gl} experiences a higher best-case latency of $10\Delta$. Our protocol slightly increases best-case latency to $6\Delta$, while preserving the 1/2 adversarial tolerance.

Our protocol's expected latency is $10\Delta$, which improves on prior works: $14\Delta$, $20\Delta$, and $32\Delta$ for the MMR2, GL, and MR protocols, respectively. Specifically, under our assumption regarding leader election (Section~\ref{sec:common-notion}), {our protocol utilizes a Verifiable Random Function (VRF)-based leader election mechanism. Each transaction is expected to be batched by an honest proposer within two views, resulting in an expected confirmation time of $10\Delta$. 
In contrast, the 1/3MMR and 1/4MMR protocols reduce their adversary tolerance to achieve comparable latency improvements.

Regarding transaction expected latency, our protocol achieves an expected latency of $12\Delta$, which contrasts with MR at $50.5\Delta$, MMR2 at $19\Delta$, and GL at $25\Delta$. We believe this metric is particularly relevant in practice, as users typically submit transactions at random times -- whenever they wish to transact -- without waiting for block delivery to submit them immediately afterward.

{
    As introduced earlier, existing TOB protocols with deterministic Safety operate in {views}, where a {block} is proposed and a decision is made within each view. 
    We analyze the number of voting phases per {new} block of all protocols in Table~\ref{tab:comparison} in both the best case and the expected case.

    In the best case, our protocol requires only one voting phase per {new} block, which improves on prior works: 10, 5, 3, and 2 voting phases per {new} block for the MR, GL, MMR2, and 1/3MMR protocols, respectively.
    Like our protocol, the 1/4MMR protocols only needs a single voting phase.

    Analyzing the expected case, the GL protocol uses 10 voting phases.
    The MMR2 protocol requires 4 GA instances, each involving 3 voting phases, resulting in a total of 12 voting phases per {new} block, while the MR protocol requires 10 GA instances with 2 voting phases each, totaling 20 voting phases per {new} block. 
    Similarly, the 1/3MMR and 1/4MMR protocols need 4 and 2 voting phases, respectively.
    Finally, our protocol needs only 2 voting phases per {new} block.
}

Our protocol achieves the same expected communication complexity as MMR2, GL, and MR which is $\mathit{O}(Ln^3)$, where $L$ represents the block size.
{Both 1/3MMR and 1/4MMR have a communication complexity of only $O(Ln^2)$ as they do not forward received messages, while all others do. However, as highlighted above, this goes to the detriment of adversarial resiliency.}

Finally, our protocol requires a \emph{stabilization period} of $2\Delta$ as a fundamental assumption for its security. Other protocols, such as 1/3MMR, 1/4MMR, and MMR2, do not require such a period. However, in practice, both these protocols and ours necessitate of a recovery mechanism to allow validators that wake up to recover messages that were sent to them while asleep, since assuming that messages are buffered and delivered immediately is not very practical.

Such recovery protocol typically requires that, upon waking up, a validator sends a \msgnamge{RECOVERY} message to other validators.
These validators then send back any messages that the newly awakened validator may have missed while asleep and that could impact future decisions.
The validator that wakes up is required to remain awake until it receives responses to the \msgnamge{RECOVERY} messages it has sent out.
Only then it is considered awake in the context of the adversary model constraints.
Such a period is, in practice, at least $2\Delta$.

Therefore, in practice, we share with other related works the requirement for a period of stable honest validator participation. For our protocol, this requirement is stronger than that of 1/3MMR, 1/4MMR, and MMR2, as we need stable participation during both the recovery period and the stabilization period. This means that, overall, our protocol's stabilization assumption is $2\Delta$ longer than that of other protocols. However, note that, in practice, when a validator wakes up, it may need to download a significant amount of data to catch up, potentially requiring more than just \(2\Delta\). For example, if the recovery procedure takes several hours, assuming an additional stabilization period of \(2\Delta\) should not significantly impact the overall protocol.

\section{Model and Definitions}
\label{sec:model}

\subsection{System model} 

\textbf{Validators.} We consider a system of~$n$ \emph{validators} $\mathcal{V}=\{v_1,\dots,v_n\}$ in a message-passing system with an underlying \emph{synchronous} network with delay bound~$\Delta > 0$. We assume that validators have synchronized clocks. Each validator is uniquely identified by a cryptographic identity, and their public keys are common knowledge.
{We use the notation $\sign{\msg{m}}{i}$ to indicate that a message $\msg{m}$ is signed by validator $v_i$.}
A protocol for~$\mathcal{V}$ consists of a collection of programs with instructions for all validators.

\textbf{Honest and Byzantine validators.} A validator that follows its protocol during an execution is called \emph{honest}. At any time, honest validators forward any message received.
On the other hand, a faulty validator may deviate arbitrarily from its specification, e.g., when corrupted by an adversary; we refer to such validators as \emph{Byzantine}. 

\textbf{Adversary and sleepy model.}
We assume an \emph{adversary} capable of permanently corrupting honest validators throughout the execution in a \emph{mildly adaptive} manner. To be more specific, if the adversary corrupts an honest validator~$v_i$ at time~$t$, then~$v_i$ becomes Byzantine only at time~$t+\Delta$. In other words, we assume that the adversary does not have the capability to corrupt honest validators immediately. Instead, there is a delay, represented by \( \Delta \), before it can do so. This delay is essential for the effective operation of a VRF-based leader election, as detailed in Section~\ref{sec:common-notion}.

The adversary can also \emph{fully adaptively} either \emph{put validators to sleep}, preventing them from participating in the protocol, or \emph{wake them up}, enabling them to participate in the protocol. Validators subjected to the sleeping state are referred to as \emph{asleep}, while those actively participating in the protocol are designated \emph{awake}. We assume that, upon waking up, validators immediately receive all messages they should have received while asleep.\footnote{This assumption, while not practical for real-world systems, serves as a theoretical framework for analyzing the protocol, consistent with prior works~\cite{sleepy, mr, mmr, mmr2}. As briefly discussed in Section~\ref{sec:related-work}, incorporating a recovery procedure into the protocol could eliminate the need for this assumption, aligning it with practical considerations. However, the design and integration of such a recovery procedure in our protocol fall outside the scope of this work.} 
In other words, we are considering (a variant of) the sleepy model~\cite{sleepy}, which we now fully specify.

In the original formulation of the sleepy model~\cite{sleepy}, {like in our variant,} the adversary can at any time determine the state -- either \emph{awake} or \emph{asleep} -- of any honest validator. Validators in the asleep state do not participate in the protocol; messages intended for them during this state are enqueued, only to be delivered in the subsequent time step when the validator is reawakened. 

Additionally, Byzantine validators remain always awake. The sleepy model, in fact, does not allow for fluctuating participation among Byzantine validators. The reason for this assumption is due to a problem called \emph{costless simulation}~\cite{DBLP:conf/fc/DebKT21}: Byzantine validators can exploit both past and future activity to compromise consensus. When awaken, a Byzantine validator might mimic continuous past engagement, generating messages to retroactively alter outcomes (\emph{backward simulation}). Additionally, these validators can conduct \emph{forward simulations} by sharing their secret keys with allied Byzantine validators creating an illusion of persistent activity.

Variants of the sleepy model, and subsequent protocols working in such variants, have been devised~\cite{mmr, mmr2, goldfish, rlmd, DBLP:conf/podc/DAmatoLZ24}, allowing the set of Byzantine validators to grow. These models can, at a minimum, address backward simulation. This particular approach, which is the one that we adopt in this paper, is referred to as the \emph{growing adversary} model. Malkhi, Momose, and Ren~\cite{mmr2} provide a formalization of this model using specific parameters. We adopt (a slight variant of) this formalization for the remaining part of our work.

To be precise, for any time $t \geq 0$, let $H_t$ be the set of awake honest validators at time $t$, and let $B_t$ be the set of Byzantine validators at time $t$\footnote{If $t<0$ we define $H_t:=\mathcal{V}$ and $B_t:=\emptyset$.}. We assume that $t_1 \leq t_2 \implies B_{t_1}\subseteq B_{t_2}$, i.e., we assume $B_t$ to be monotonically \emph{non-decreasing}\footnote{The reason for this assumption is due to forward simulation.}.
For the interval \([t_1, t_2]\), we define $H_{t_1, t_2} = (\bigcap_{t \in [t_1, t_2]} H_t)$ to be the set containing the honest validators that were awake between~\( t_1 \) and~\( t_2 \).

Let $T_b$ and $T_s$ be non-negative integers representing two specific time constants.
The parameter $T_b$ represents the duration for which backward simulation~\cite{mmr2} is considered. In other words, Byzantine validators are counted for an extra time $T_b$ backward. In other words, $T_b$ captures the requirement that not too many honest validators become corrupted within \( T_b \) after performing a specific action, such as voting, in order to prevent them from sending contradictory messages shortly afterward. Our model is additionally augmented with the stabilization period $T_s$ in order to encapsulate a \emph{stable participation} requirement. This is emphasized by focusing on honest validators who remain awake throughout the time span~\( T_s \). 

Then, we define the \emph{active validators at time $t$} to be the set $H_{t - T_s, t} \cup B_{t + T_b}$.
Intuitively, this corresponds to the smallest set of validators that might\footnote{By ``might'' we mean that there exists an execution where this happens.} send a message during a GA instance\footnote{This concept will be formally introduced later.} starting at time $t$ and lasting $T_b$.
More generally, this set corresponds to the smallest set of validators that might influence outputs/decisions dependent on messages sent by honest validators at time $t$.

Furthermore, let~$\rho \le \frac{1}{2}$ denote a predetermined failure ratio of Byzantine to active validators.

A system is compliant with the $(T_b, T_s, \rho)$-\emph{sleepy model}
if and only if, for every time $t \geq 0$, the following condition

is satisfied:
\begin{equation}
\label{eq:sleepy}
 |B_{t + T_b}| < \rho \cdot |H_{t - T_s, t} \cup B_{t + T_b}| 
\end{equation}

Finally, as long as a validator remains honest, the adversary cannot forge its signatures.

\subsection{Graded Agreement and Total-Order Broadcast}
\label{sec:primitives}

\textbf{Logs.} We define a \emph{log} as a finite sequence of \emph{blocks}~$b_i$, represented as \(\Lambda = [b_1,b_2,\dots,b_k]\). Here, a block represents a batch of \emph{transactions} and it contains a reference to another block. In this work, we assume that there exists an external pool of transactions. Honest validators retrieve transactions from this pool and validate them using a specified validity predicate before batching them into blocks. 

Given two logs, \(\Lambda\) and \(\Lambda'\), the notation \(\Lambda \preceq \Lambda'\) indicates that \(\Lambda\) is a prefix of \(\Lambda'\).
Two logs are \emph{compatible} if one acts as a prefix for the other. Conversely, if neither log is a prefix of the other, they are \emph{conflicting}. 
Finally, we say that a log $\Lambda'$ is an \emph{extension} of $\Lambda$ if {and only if} $\Lambda$ is a prefix of $\Lambda'$.
{We assume that any log is an extension of a log $\genesis$ known to any validator.\footnote{In blockchain protocols, $\genesis$ typically consists of a log of length $1$.}}

\textbf{Graded Agreement.} We define a generic \emph{Graded Agreement} (GA) primitive, with $k \ge 1$ grades. In such a primitive, each validator has an input {log}~$\Lambda$, and validators can output \emph{logs} with grade~$0 \leq g < k$, which we denote with the pair~$(\Lambda, g)$.

We refer to the phase where a log~$\Lambda$ is inputted into the primitive as the \emph{input phase}, and the phase during which the output is retrieved from the Graded Agreement primitive as the \emph{output phase}. For this reason, we often refer to a log input by a validator~$v_i$ into the GA as \emph{the input of validator~$v_i$}. Such a primitive can have different output phases (potentially up to~$k$) but only one input phase. 

A validator~$v_i$ that is awake in the output phase for grade~$g$ \emph{may attempt} to output a log with grade~$g$, i.e., to run an output procedure, possibly resulting in outputting some log, potentially more than one, with grade~$g$. If validator~$v_i$ attempts to do so, we say that~$v_i$ \emph{participates in the output phase for grade~$g$}. The criteria which validators use to decide whether to participate in output phases are specific to each implementation of the Graded Agreement primitive, with the caveat that \emph{honest validators that are always awake participate in every output phase}, so that outputs can at a minimum be guaranteed when honest participation is \emph{stable}. 

Keeping in mind that honest validators can output multiple logs with the same grade, we require the Graded Agreement primitive to satisfy the following properties.

\begin{enumerate}
    \item \textbf{Consistency:} 
    If an honest validator outputs $(\Lambda, g)$ for $g > 0$, then no honest validator outputs $(\Lambda', g)$ with $\Lambda'$ conflicting with $\Lambda$.

    \item \textbf{Graded Delivery:} 
    If an honest validator outputs $(\Lambda, g)$ for $g > 0$, any honest validator that participates in the output phase for grade $g-1$ outputs $(\Lambda, g-1)$.

    \item \textbf{Validity:} 
    {If each honest validator awake at time $0$ inputs a potentially different extension of a log $\Lambda$,}
    then all honest validators participating in the output phase for a grade $g$ output $(\Lambda, g)$.

    \item \textbf{Integrity:} 
    If no honest validator inputs an extension of $\Lambda$, then no honest validator outputs $(\Lambda, *)$.

    \item \textbf{Uniqueness:} 
    An honest validator does not output $(\Lambda, g)$ and $(\Lambda', g)$ for $\Lambda$ conflicting with $\Lambda'$.
\end{enumerate}

Note that, for $g > 0$, Consistency already implies Uniqueness. It is only a separate property for outputs of grade $0$.

\textbf{Total-Order Broadcast.}
A  Total-Order Broadcast (TOB) protocol ensures that all the honest validators \emph{deliver} the same log~$\Lambda$. 

A protocol for (Byzantine) Total-Order Broadcast satisfies the following properties.

\begin{enumerate}
    \item \textbf{Safety:} 
    If two honest validators deliver logs $\Lambda_1$ and $\Lambda_2$, then $\Lambda_1$ and $\Lambda_2$ are compatible.

    \item \textbf{Liveness:} For every \emph{valid} transaction $\mathit{tx}$ {in the pool of transactions}, there exists a time $t$ such that all honest validators awake for sufficiently long\footnote{The duration ``sufficiently long" varies depending on the protocol.} after $t$ deliver a log $\Lambda$ that includes  (a block that includes) transaction $\mathit{tx}$.
\end{enumerate}

      Given a choice of parameters $(T_b, T_s, \rho)$ as previously introduced when presenting the sleepy model, we say that a Total-Order Broadcast protocol is \emph{dynamically available} if it is a Total-Order Broadcast in the $(T_b, T_s, \rho)$-sleepy model.

\subsection{Common notions}
\label{sec:common-notion}
\textbf{Messages.} Our protocol defines only one type of message, the \logmsgname{} message $\logmsg{\Lambda}$ where $\Lambda$ is a log.
Informally, we say that validator $v_i$ sends log $\Lambda$ to mean that $v_i$ sends a message $\sign{\logmsg{\Lambda}}{i}$.
Similarly, we say that a validator $v_j$ receives a log from validator $v_i$  to mean that validator $v_j$ receives the message $\sign{\logmsg{\Lambda}}{i}$.

\textbf{Equivocations.} We refer to multiple different {\logmsgname{} messages} from the same validator as \emph{equivocations}, and to any pair of such {\logmsgname{} messages} as \emph{equivocation evidence} for its sender. Honest validators only ever accept and forward up to two {\logmsgname{} messages} per validator. Validators that are known to have sent an equivocation are called \emph{equivocators}.

\textbf{Leader election.}
Our Total-Order Broadcast protocols proceeds in \emph{views}, and employs a VRF-based leader election primitive~\cite{sleepy}. Each validator has an associated VRF value for each view. Whenever a \emph{proposal} has to be made to extend the current log, validators broadcast one together with their VRF value for the current view, and priority is given to proposals with a higher VRF value. Since this is a standard tool for leader election in dynamically available protocols~\cite{mr,mmr, mmr2, goldfish, gl}, and not the focus of this work, we use the VRF primitive informally. 

As mentioned at the beginning of this section, such a leader election requires us to consider a mildly adaptive corruption model, where the adversary has a delay of time $\Delta$ between scheduling a corruption and executing it. This appears to be necessary also in other protocols which use this strategy~\cite{DBLP:conf/podc/DAmatoLZ24,goldfish,rlmd}. To see why, consider the usual leader election where VRF values are broadcast at time $t$ and a leader is chosen at time $t+\Delta$ based on the highest VRF value so far observed. Between time $t$ and $t+\Delta$, an adaptive adversary can observe the highest VRF value and corrupt its sender, then have it deliver an equivocating proposal only to a subset of the honest validators. This way, some subset of the honest validators only knows of a single proposal with the highest VRF value, and some other subset knows of two such proposals. We cannot then ensure that all honest validators \emph{vote}\footnote{The terms ``propose" and ``vote" are used here informally to appeal to the reader's intuition. These terms will be formally defined in the following section.} for the same proposal, which is typically required by {the} Liveness arguments.

For the following, we define a \emph{good leader for view~$v$ starting at time $t_v$} to be a validator in~$H_{t_v} \setminus B_{t_v + \Delta}$ holding the highest VRF value for view $v$ among validators $H_{t_v} \cup B_{t_v + \Delta}$, i.e., among all validators from which a proposal for view $v$ might be received by time $t_v + \Delta$.
Note that a good leader always \emph{proposes something}.

\textbf{Validator state.} At all times, an honest validator keeps only two local variables, $V$ and $E$. First, $V$ associates to a validator~$v_i$ the log {$V(i) = \sign{\logmsg{\Lambda}}{i}$} if it has received an unique message {\sign{\logmsg{\Lambda}}{i} (from validator $v_i$)}, or $V(i) = \bot$ if either none or at least two messages $\logmsg{\Lambda}$ and $\logmsg{\Lambda'}$ with $\Lambda \neq \Lambda'$ have been received from $v_i$. In other words, $V$ keeps track of non-equivocating {\logmsgname{} messages}. We write~$v_i \in V$ if {and only if} $V(i) \not = \bot$, and write~$V_{\Lambda}$ for the {set} containing all extensions of~$\Lambda$ recorded in $V$, paired with their sender, i.e., {$V_{\Lambda} = \{(\Lambda', v_i): v_i \in V, V(i) = \sign{\logmsg{\Lambda'}}{i}, \Lambda'\succeq \Lambda\}$}.
{We say that a validator $v_i$ is in $V_\Lambda$ to mean that there exists a pair $(*, v_i) \in V_\Lambda$.
Similarly, we say that a log $\Lambda'$ is in $V_\Lambda$ to mean that there exists a pair $(*, \Lambda') \in V_\Lambda$.}

Moreover, $E$ contains a record of equivocators and equivocation evidence, i.e., $E(i) = \bot$ if $v_i$ is not known to have equivocated, and otherwise {$E(i) = (\sign{\logmsg{\Lambda}}{i}, \sign{\logmsg{\Lambda'}}{i})$, where $(\sign{\logmsg{\Lambda}}{i}, \sign{\logmsg{\Lambda'}}{i})$} is equivocation evidence for validator~$v_i$. As for $V$, we write $v_i \in E$ if {and only if} $E(i) \not = \bot$. A validator can compute from $V$ and $E$ the set $S = \{v_i \in \mathcal{V}: v_i \in V \lor v_i \in E\}$ of all {the senders of \logmsgname{} messages}, i.e., of all validators from which \emph{at least one {\logmsgname{} message}} has been received.

When we want to emphasize the time $t$ at which we consider these variables, we write~$V^t$, $E^t$, and~$S^t$. If we also want to emphasize the validator~$v_i$ whose sets we consider, we write~$V^{t,i}$, $E^{t,i}$, and~$S^{t,i}$.  

\textbf{Message handling.} If message $\sign{\logmsg{\Lambda}}{i}$ is received at time $t$, we have three possibilities on how to handle it. 
If~$v_i \not \in V$, i.e., we have not received a $\logmsgname$ message from~$v_i$ yet, we add $\Lambda$ to $V$ and forward the message. 
If~$v_i \in V$ and {$V(i) = \sign{\logmsg{\Lambda'}}{i}$} for $\Lambda' \neq \Lambda$, i.e., we are first learning about an equivocation from~$v_i$, we record the equivocation by setting $V(i) = \bot$ and {$E(i) = (\sign{\logmsg{\Lambda}}{i},\sign{\logmsg{\Lambda'}}{i})$}. 
Moreover, we also forward the message, to make sure other validators also learn about the equivocation. 
Finally, $\sign{\logmsg{\Lambda}}{i}$ is ignored if~$v_i \in E$, i.e., we already know~$v_i$ is an equivocator.

\section{Background}
\label{sec:background}

Momose and Ren~\cite{mr} introduced the first \emph{quorum-based} Graded Agreement protocol working in the sleepy model~\cite{sleepy}, achieving optimal adversarial resilience. Tolerating dynamic participation necessitates doing away with absolute quorums (e.g., 1/2 of the validators), and instead defining them based on the participation level at a given point in time. However, validators can have different perceived participation levels, making agreement challenging. 
To overcome this, Momose and Ren {\cite{mr}} introduce the novel \emph{time-shifted quorum} technique which we now summarize.
Note that, to limit the notation that we need to introduce to explain such a protocol, we adapt the original protocol by Momose and Ren to work on logs rather than values {like} in the original formulation as
the key concept of time-shifted quorum that we want to illustrate here is not affected by such a change.
We let $X_\Lambda$\footnote{In the original work by Momose and Ren~\cite{mr}, this is denoted as $\mathcal{E}(b)$, where $b$ represents a value rather than a log. Similarly, the following terminology equivalences apply: $V_\Lambda$ corresponds to $\overline{\mathcal{E}}(b)$ in their work, while $S^{{2\Delta}}$ is identified with $\mathcal{M}_1$ and {$S^{{3\Delta}}$ is identified with} $\mathcal{M}_3$ in~\cite{mr}.} be the set of all validators from which we have received a message $\logmsg{\Lambda'}$ with $\Lambda' \succeq \Lambda$, regardless of whether we have also received any different {\logmsgname{} message}, i.e., regardless of whether they have equivocated. At any time, honest validators forward any message received.
Then, the protocol executed by an honest validator $v_i$ upon \emph{inputting} $\Lambda$ is as follows. 

\begin{enumerate}
    \item 
    At $t=0$, broadcast $\sign{\logmsg{\Lambda}}{i}$.

    \item 
    At $t=\Delta$, store $V^\Delta$.

    \item 
    At $t = 2\Delta$, a validator  sends a $\msgnamge{VOTE}$ message for $\Lambda$
    if the \emph{current} support level (without discarding equivocations) is greater than half of the perceived participation level, i.e, if $|X_\Lambda^{2\Delta}| > |S^{2\Delta}|/2$.

    \item 
    At $t=3\Delta$, output $(\Lambda,1)$ if  $|V_\Lambda^{\Delta}| > |S^{3\Delta}|/2$, and $(\Lambda,0)$ if the number of \msgnamge{VOTE} messages for a log~$\Lambda' \succeq \Lambda$ is greater than half of all received \msgnamge{VOTE} messages.
\end{enumerate}

The key insight is that, if an honest validator~$v_i$ outputs~$(\Lambda,1)$ by seeing $|V_\Lambda^{\Delta,i}| > |S^{3\Delta,i}|/2$, then an honest validator~$v_j$ awake at time~$2\Delta$ would have $X_\Lambda^{2\Delta,j}$ and $S^{2\Delta,j}$ such that $|X_\Lambda^{2\Delta,j}| \ge |V_\Lambda^{\Delta,i}| > |S^{3\Delta,i}|/2 \ge |S^{2\Delta,j}|/2$, and thus would vote for~$\Lambda$. The first inequality holds because any \logmsgname{} message seen by~$v_i$ at time~$\Delta$, and thus counted in $V_\Lambda^{\Delta,i}$, is forwarded and received by~$v_j$ by time~$2\Delta$, and so counted in $X_\Lambda^{2\Delta,j}$ as well. The converse applies to $|S^{2\Delta,j}|/2$ and $|S^{3\Delta,i}|/2$, i.e., $|S^{2\Delta,j}|/2$ is determined $\Delta$ time before $|S^{3\Delta,i}|/2$, justifying the last inequality. In other words, an honest validator outputting~$(\Lambda,1)$ implies that all honest validators awake at time~$2\Delta$ vote for~$\Lambda$, and thus, {given that $\rho \le \frac{1}{2}$}, also that all honest validators output~$(\Lambda,0)$.

Notably, this protocol counts \emph{all \logmsgname{} messages}, including equivocations, when determining $|X_\Lambda^{2\Delta}|$, though it does not do so when determining $|V_\Lambda^{\Delta}|$. 
This is crucial in the time-shifted quorum argument, because then all \logmsgname{} messages that count for $|V_\Lambda^\Delta|$ are guaranteed to count for $|X_\Lambda^{2\Delta}|$ as well. 

A compromise of this approach is that it prevents this GA primitive from satisfying the Uniqueness property on values with grade~$0$. The same limitation applies to the variant presented in~\cite{mmr2}. Implementing a Total-Order Broadcast based on these primitives introduces significant complexity.

\section{TOB-SVD}
\label{sec:tob-svd}

We begin this section by presenting the foundational building block of our total-order broadcast protocol. 
A detailed analysis of our results can be found in Section~\ref{sec:analysis}.

\subsection{Graded Agreement with $k=2$ grades.}
\label{sec:2valGA}
Our first Graded Agreement protocol is a variant of the graded agreement of MR~\cite{mr} (Section~\ref{sec:background}), designed to also satisfy Uniqueness for every grade. The protocol is {given in Figure~\ref{fig:ga-two-values}}.
It lasts $3\Delta$ time, and requires that the number of Byzantine validators is less than half the number of active validators.
Therefore, it specifically works in the $(3\Delta, 0,\frac{1}{2})$-sleepy model.

\begin{figure}[h!]
    \centering
    \begin{boxedminipage}[t]{\columnwidth}\small
        \textbf{Upon input} $\Lambda$, a validator~$v_i$ runs the following algorithm whenever awake. All validators awake at time~$2\Delta$ participate in the output phase for grade~$0$. A validator participates in the output phase for grade~$1$ at time~$3\Delta$ if {and only if} it was awake also at time~$\Delta$. At any time, honest validators forward any message received. Up to two different \logmsgname{} messages per sender are forwarded upon reception.

\begin{enumerate}
    \item \textbf{Input phase, $(t = 0)$:} Broadcast $\sign{\logmsg{\Lambda}}{i}$.

    \item \textbf{$(t = \Delta)$:} Store $V^\Delta$.

    \item \textbf{Output phase for grade $0$, $(t = 2\Delta)$:} 
    If $|V^{2\Delta}_{\Lambda}| > |S^{2\Delta}|/2$: Output $(\Lambda, 0)$.

    \item \textbf{Output phase for grade $1$, $(t = 3\Delta)$:} 
    If awake at time $\Delta$: If $|V^{\Delta}_{\Lambda} \cap V^{3\Delta}_{\Lambda}| > |S^{3\Delta}|/2$, Output $(\Lambda, 1)$.
\end{enumerate}
 
    \end{boxedminipage}
    \caption{Graded Agreement protocol with $k=2$ grades -- protocol for validator~$v_i$.}
    \label{fig:ga-two-values}
\end{figure}

The protocol still relies on the key ideas of the time-shifted quorum technique presented above. A notable difference is that grade~$0$ outputs in MR~\cite{mr} are computed at time~$3\Delta$ using \msgnamge{VOTE} messages, whereas our GA protocol does not have any other message other than \logmsgname{} messages, and it computes grade~$0$ outputs at time~$2\Delta$. That said, votes in their GA are cast (almost) in the same way as grade~$0$ outputs are computed in our protocol, and they are essentially used to propagate the information forward until the end of the protocol. Another difference is the use of $|V_\Lambda^{\Delta} \cap V_\Lambda^{{3\Delta}}|$ when determining outputs of grade~$1$, which is related to the treatment of equivocations. Let us initially ignore that, and pretend that we output $(\Lambda, 1)$ when $|V_{\Lambda}^\Delta| > |S^{3\Delta}|/2$.

If we assumed that no equivocation is possible, we would get that the inequalities $|V^{2\Delta, j}| \ge |V^{\Delta, i}| > |S^{3\Delta, i}|/2 \ge |S^{2\Delta, j}|/2$ hold for validators~$v_i$ and $v_j$, when~$v_i$ outputs~$(\Lambda, 1)$, analogously to $|X_\Lambda^{2\Delta,j}| \ge |V_\Lambda^{\Delta,i}| > |S^{3\Delta,i}|/2 \ge |S^{2\Delta,j}|/2$ in the previous GA. This would immediately give us that~$v_j$ outputs~$(\Lambda, 0)$, i.e., Graded Delivery. However, the inclusion \(V^{\Delta, i} \subseteq V^{2\Delta, j}\) is not guaranteed once equivocations are allowed, because validator~\(v_j\) might discard some of the logs in~\(V^{\Delta, i}\) between time~\(\Delta\) and time~\(2\Delta\), if such logs turned out to be equivocations. 

To ensure that the supporting \logmsgname{} messages used when attempting to output a log $\Lambda$ with grade~\(0\) are more than when doing so with grade~\(1\), we use \(V^{\Delta}_\Lambda \cap V^{3\Delta}_\Lambda\) instead of just \(V^{\Delta}_\Lambda\). Crucially, any validator~$v_k$ which is seen as an equivocator by validator~$v_j$ at time $2\Delta$ will also be seen as an equivocator by validator~$v_i$ at time $3\Delta$, since~$v_j$ would forward the equivocating \logmsgname{} messages of~$v_k$ upon receiving them. In particular, {logs sent by}~$v_k$ will \emph{not} be contained in~$V^{3\Delta, i}$, and thus also not in $V^{\Delta, i}_\Lambda \cap V^{3\Delta, i}_\Lambda$. This ensures that $V^{\Delta, i}_\Lambda \cap V^{3\Delta, i}_\Lambda \subseteq V^{2\Delta, j}_\Lambda$, since any log in $V^{\Delta, i} \setminus V^{2\Delta, j}$ is an equivocation, which {implies that it is} absent from $V^{3\Delta, i}$. In other words, we separate the initial determination of possible supporting  \logmsgname{} messages from the final determination of which  \logmsgname{} messages should be treated as equivocations and removed, and we \emph{apply the time-shifted quorum technique to the set of equivocators} as well: like the perceived participation level, the set of equivocators \emph{increases} when going from the output phase for grade~$0$ to the output phase for grade~$1$.

\subsection{Graded Agreement with $k=3$ grades.}
\label{sec:3valGA}

Building upon our Graded Agreement protocol with $k=2$ grades, we extend it to a Graded Agreement with $k=3$ grades by applying the time-shifted quorum technique \emph{twice}. The first application happens during time~$[2\Delta, 4\Delta]$, and ensures the Graded Delivery property between grades~$0$ and $1$, exactly in the same way as in our Graded Agreement with $k=2$ grades. This application is \emph{nested} inside the second one, which ensures the Graded Delivery property between grades~$1$ and~$2$, and happens during time~$[\Delta, 5\Delta]$. Overall, the relevant inclusions are $V^{\Delta}_\Lambda \cap V^{5\Delta}_\Lambda \subseteq V^{2\Delta}_\Lambda \cap V^{4\Delta}_\Lambda \subseteq V^{3\Delta}_\Lambda$, for any $\Lambda$, and $S^{3\Delta} \subseteq S^{4\Delta} \subseteq S^{5\Delta}$ (each set can belong to a different validator). The protocol is {given in Figure~\ref{fig:ga-three-values}}. 
This protocol lasts $5\Delta$ time instead of $3\Delta$ and, similarly to the previous GA, it requires that the number of Byzantine validators is less than half the number of active validators. Therefore, it in particular works in the $(5\Delta, 0,\frac{1}{2})$-sleepy model. 

\begin{figure}[h!]
    \centering

    \begin{boxedminipage}[t]{\columnwidth}\small

        \textbf{Upon input} $\Lambda$, a validator~$v_i$ runs the following algorithm whenever awake. All validators awake at time~$3\Delta$ participate in the output phase for grade~$0$. A validator participates in the output phase for grade~$1$ at time~$4\Delta$ if {and only if} it was awake also at time~$2\Delta$. A validator participates in the output phase for grade~$2$ at time~$5\Delta$ if {and only if} it was awake also at time~$\Delta$. At any time, honest validators forward any message received. Up to two different \logmsgname{} messages per sender are forwarded upon reception.

\begin{enumerate}
    \item \textbf{Input phase, $(t = 0)$:} Broadcast $\sign{\logmsg{\Lambda}}{i}$.

    \item \textbf{$(t = \Delta)$:} Store $V^\Delta$.

    \item \textbf{$(t = 2\Delta)$:} Store $V^{2\Delta}$.

    \item \textbf{Output phase for grade $0$, $(t = 3\Delta)$:} 
    If $|V^{3\Delta}_{\Lambda}| > |S^{3\Delta}|/2$, output $(\Lambda, 0)$.

    \item \textbf{Output phase for grade $1$, $(t = 4\Delta)$:} 
    If awake at time $2\Delta$: If $|V^{2\Delta}_{\Lambda} \cap V^{4\Delta}_{\Lambda}| > |S^{4\Delta}|/2$, output $(\Lambda, 1)$.

    \item \textbf{Output phase for grade $2$, $(t = 5\Delta)$:} 
    If awake at time $\Delta$: If $|V^\Delta_{\Lambda} \cap V^{5\Delta}_{\Lambda}| > |S^{5\Delta}|/2$, output $(\Lambda, 2)$.
\end{enumerate}

\end{boxedminipage}
    \caption{Graded Agreement with grades $k=3$ grades -- protocol for validator~$v_i$.}
\label{fig:ga-three-values}
\end{figure}

\subsection{TOB-SVD}
\label{sec:our-tob}

We conclude this section with our main result, TOB-SVD, a dynamically available Total-Order Broadcast protocol. Byzantine Total-Order Broadcast protocols can be built using Graded Agreement primitives, with some protocols invoking Graded Agreement multiple times for a single decision~\cite{mr, mmr, mmr2}. At its core, dynamically available TOB aims to guarantee the total order delivery of messages among honest validators, even in the face of challenges like dynamic participation. However, existing TOB protocols~\cite{mr, mmr,mmr2,gl} have grappled with challenges related to latency~\cite{mr}, resilience~\cite{mmr}, and scalability~\cite{gl}. Multiple invocations of GA for a single decision, {meaning multiple voting phases per new block,} might exacerbate scalability and latency issues, especially when considering implementation in blockchain networks with hundreds of thousands of validators weighing in on every decision. 

Building upon these insights, our work introduces a GA-based dynamically available TOB protocol -- TOB-SVD -- which, akin to 1/4MMR~\cite{mmr}, {in the best case} necessitates only a single GA invocation for {new block}, but enhances adversarial resiliency to 1/2.

Our protocol, which works in the $(5\Delta, 2\Delta, \frac{1}{2})$-sleepy model, proceeds over a series of views, each spanning a duration of $4\Delta$. Every view~$v$ initiates a Graded Agreement~$GA_v$ with grades~$0$, $1$, and $2$, which extends and overlaps with the following~$GA_{v+1}$ during view~$v+1$. This structure implies that a single view does not encapsulate a full cycle of a Graded Agreement; instead, a~$GA_v$ initiated in view~$v$ concludes its operations only in the succeeding view~$v+1$.

\begin{figure*}[htbp]
\centering
\begin{tikzpicture}[scale=0.8, every node/.style={scale=0.8}] 

\def \l{4.5}
\pgfmathsetmacro\r{5*\l/4}
\def \f{\scriptsize}

\draw[thick, <->] (0, -4.5) -- (\l/4, -4.5);
\node[above] at (\l/8, -4.5) {$\Delta$};

\draw[thick, opacity=0.4] (0,-3.5) -- (\l,-3.5);
\draw[thick] (\l,-3.5) -- (2*\l,-3.5);
\draw[thick, opacity=0.4] (2*\l,-3.5) -- (3*\l,-3.5);

\draw[thick, opacity=0.4] (0,-3) -- (0,-4);
\draw[thick] (\l,-3) -- (\l,-4);
\draw[thick] (2*\l,-3) -- (2*\l,-4);
\draw[thick, opacity=0.4] (3*\l,-3) -- (3*\l,-4);

\node[below, opacity=0.4] at (\l/2,-3.75) {View~$v-1$};
\node[below] at (\l+\l/2,-3.75) {View~$v$};
\node[below, opacity=0.4] at (2*\l + \l/2,-3.75) {View~$v+1$};

\draw[dotted] (\l+ \l/4,-3.25) -- (\l + \l/4,-3.75);
\draw[dotted] (\l + 2*\l/4,-3.25) -- (\l + 2*\l/4,-3.75);
\draw[dotted] (\l+ 3*\l/4,-3.25) -- (\l + 3*\l/4,-3.75);
\draw[dotted] (\l + 4*\l/4,-3.25) -- (\l + 4*\l/4,-3.75);

\draw[dotted] (0,-3.25) -- (0,-3.75);
\draw[dotted] (\l/4,-3.25) -- (\l/4,-3.75);
\draw[dotted] (2*\l/4,-3.25) -- (2*\l/4,-3.75);
\draw[dotted] (3*\l/4,-3.25) -- (3*\l/4,-3.75);
\draw[dotted] (4*\l/4,-3.25) -- (4*\l/4,-3.75);

\draw[dotted] (2*\l+ \l/4,-3.25) -- (2*\l + \l/4,-3.75);
\draw[dotted] (2*\l + 2*\l/4,-3.25) -- (2*\l + 2*\l/4,-3.75);
\draw[dotted] (2*\l+ 3*\l/4,-3.25) -- (2*\l + 3*\l/4,-3.75);
\draw[dotted] (2*\l + 4*\l/4,-3.25) -- (2*\l + 4*\l/4,-3.75);

\node[anchor=base] at (\l,-2.8) {Propose};
\node[anchor=base] at (\l + \l/4,-2.8) {Vote};
\node[anchor=base] at (\l + 2*\l/4,-2.8) {Decide};

\node[anchor=base, opacity=0.4] at (2*\l,-2.8) {Propose};
\node[anchor=base, opacity=0.4] at (2*\l + \l/4,-2.8) {Vote};
\node[anchor=base, opacity=0.4] at (2*\l + 2*\l/4,-2.8) {Decide};

\node[anchor=base, opacity=0.4] at (0,-2.8) {Propose};
\node[anchor=base, opacity=0.4] at (0 + \l/4,-2.8) {Vote};
\node[anchor=base, opacity=0.4] at (0 + 2*\l/4,-2.8) {Decide};

\draw[thick, color=orange] (\l/4,-1.5) -- (\l + 2*\l/4,-1.5);

\draw[thick, color=orange] (\l/4,-1.25) -- (\l/4,-1.75);
\draw[thick, color=orange] (\l + 2*\l/4,-1.25) -- (\l + 2*\l/4,-1.75);

\node[below, color=orange] at (3.5*\l/4,-1.5) {$GA_{v-1}$};

\draw[dotted, color=orange, thick] (2*\l/4,-1.25) -- (2*\l/4,-1.75);
\draw[dotted, color=orange, thick] (3*\l/4,-1.25) -- (3*\l/4,-1.75);
\draw[dotted, color=orange, thick] (4*\l/4,-1.25) -- (4*\l/4,-1.75);
\draw[dotted, color=orange, thick] (5*\l/4,-1.25) -- (5*\l/4,-1.75);

\node[above, color=orange, font=\f] at (\l/4, -1.25) {Input};
\node[above, color=orange, font=\f] at (\l/4 + 3*\r/5, -1.25) {Out.~$0$};
\node[above, color=orange, font=\f] at (\l/4 + 4*\r/5, -1.25) {Out.~$1$};
\node[above, color=orange, font=\f] at (\l/4 + 5*\r/5, -1.25) {Out.~$2$};

\draw[->, thick, color=orange] (\l,-2) -- (\l,-2.5);
\draw[->, thick, color=orange] (\l + \l/4,-2) -- (\l + \l/4,-2.5);
\draw[->, thick, color=orange] (\l + 2*\l/4,-2) -- (\l + 2*\l/4,-2.5);

\draw[->, thick, color=orange] (\l + \l/4,-3) -- (\l + \l/4,-4.8);

\draw[->, thick, color=blue] (2*\l,-4.8) -- (2*\l,-4.3);
\draw[->, thick, color=blue] (2*\l + \l/4,-4.8) -- (2*\l + \l/4,-4.3);
\draw[->, thick, color=blue] (2*\l + 2*\l/4,-4.8) -- (2*\l + 2*\l/4,-4.3);

\draw[thick, color=blue] (\l + \l/4,-5.5) -- (2*\l+2*\l/4,-5.5);

\draw[thick, color=blue] (\l + \l/4,-5.25) -- (\l + \l/4,-5.75);
\draw[thick, color=blue] (2*\l + 2*\l/4,-5.25) -- (2*\l + 2*\l/4,-5.75);

\node[below, color=blue] at (\l + 3.5*\l/4,-5.5) {$GA_v$};

\draw[dotted, color=blue, thick] (\l + 2*\l/4,-5.25) -- (\l + 2*\l/4,-5.75);
\draw[dotted, color=blue, thick] (\l + 3*\l/4,-5.25) -- (\l + 3*\l/4,-5.75);
\draw[dotted, color=blue, thick] (\l + 4*\l/4,-5.25) -- (\l + 4*\l/4,-5.75);
\draw[dotted, color=blue, thick] (\l + 5*\l/4,-5.25) -- (\l + 5*\l/4,-5.75);

\node[above, color=blue, font=\f] at (\l + \l/4, -5.25) {Input};
\node[above, color=blue, font=\f] at (\l + \l/4 + 3*\r/5, -5.25) {Out.~$0$};
\node[above, color=blue, font=\f] at (\l + \l/4 + 4*\r/5, -5.25) {Out.~$1$};
\node[above, color=blue, font=\f] at (\l + \l/4 + 5*\r/5, -5.25) {Out.~$2$};

\end{tikzpicture}
\caption{In the middle, views~$v-1$, $v,$ and $v+1$ of our Total-Order Broadcast protocol, each with its three phases. At the top and bottom, respectively, $GA_{v-1}$ and $GA_v$ --- the Graded Agreement instances that are run as part of the TOB protocol. Arrows indicate that outputs of a GA are used by a parallel phase of the TOB and/or the next GA: outputs of grade~$0$ of $GA_{v-1}$ are extended by proposals of view~$v$, outputs of grade~$1$ of $GA_{v-1}$ are extended by votes in the TOB, which \emph{exactly correspond} to the inputs to $GA_v$, while outputs of grade~$2$ of $GA_{v-1}$ are decided in view~$v$.}
\label{fig:protocol-diagram}
\end{figure*}

Specifically, the protocol, which is presented in Figure~\ref{fig:tob1}, proceeds in views of $4\Delta$ time each. We let $t_v = 4\Delta v$ be the beginning of view~$v$. To each view $v$ corresponds a Graded Agreement $GA_v$, which runs in the time interval $[t_v + \Delta, t_v + 6\Delta] = [t_v + \Delta, t_{v+1} + 2\Delta]$, i.e., $GA_v$ takes up some of view $v+1$ as well. Moreover, the GA invocations are not perfectly sequential, as $GA_v$ and $GA_{v+1}$ overlap during time $[t_{v+1} + \Delta, t_{v+1} + 2\Delta]$ (Figure~\ref{fig:protocol-diagram}).

At the beginning of each view~$v$ there is a \emph{proposal phase} corresponding to the output phase for grade~$0$ of~$GA_{v-1}$. The leader of view~$v$ proposes a log~$\Lambda$ extending its grade~$0$ output (a \emph{candidate}), if it has one. 

Next, at time $t_v + \Delta$, comes a \emph{voting phase}, corresponding both to the input phase of $GA_v$ and to the output phase for grade~$1$ of $GA_{v-1}$. Validators which have a grade~$1$ output treat it as a \emph{lock}, and to preserve Safety they only input to~$GA_v$ either the lock itself or a proposal which extends it. 

This is followed by a \emph{decision phase} at time~$t_v + 2\Delta$, where logs proposed (at best) \emph{in the previous view} ($v-1$) can be decided. In particular, this phase corresponds to the output phase for grade~$2$ of~$GA_{v-1}$, and such grade~$2$ outputs are decided. This is safe, because Graded Delivery of Graded Agreement ensures that all honest validators (among those participating in the output phase for grade~$1$) output with grade~$1$ a decided log, thus locking on it and therefore inputting to $GA_v$ a log extending it. During the decision phase, validators also update~$V$ for $GA_v$ with all \logmsgname{} messages received until that point in time for~$GA_v$ {from non-equivocating validators}, as part of the ongoing $GA_v$. 

Finally, at time $t_v + 3\Delta$, only the action required by~$GA_v$ is performed, i.e., to store in~$V^{2\Delta}$ all \logmsgname{} messages received until that point from non-equivocating validators.

Whenever an action requires a $GA$ output which is not computed, i.e., the validator chooses not to participate in the output phase due to not having been previously awake when required, the action is skipped. In particular, no decision is taken at time $t_v + 2\Delta$ and no \logmsgname{} message is broadcast at time $t_v + \Delta$ when the required outputs are not available.

\begin{figure}[ht!]
    \centering

    \begin{boxedminipage}[t]{\columnwidth}\small

    Outputs of $GA_{-1}$, which are used in view $v=0$, are all formally defined to be the log containing only the Genesis {log}. In each view $v\ge 0$, awake validators participate in the GA instances that are ongoing, and in addition behave as specified here \emph{whenever they have the required GA outputs to do so}. Validators do not perform actions which require outputs they do not have.

\begin{enumerate}
    \item \textbf{Propose ($t=t_v$):} 
    Output phase for grade $0$ of $GA_{v-1}$. Propose $\Lambda'$ extending $\Lambda$, the highest log output with grade $0$ by $GA_{v-1}$ (\emph{candidate}), accompanied by {the} VRF value for view $v$.

    \item \textbf{Vote ($t=t_v + \Delta$):} 
    Output phase for grade $1$ of $GA_{v-1}$. $GA_v$ starts. Let $L_{v-1}$ be the highest log output with grade $1$ by $GA_{v-1}$ (\emph{lock}). After discarding equivocating proposals, input to $GA_v$ the proposal with the highest VRF value extending $L_{v-1}$, or $L_{v-1}$ if no such proposal exists.

    \item \textbf{Decide ($t=t_v + 2\Delta$):} 
    Output phase for grade $2$ of $GA_{v-1}$, which ends. Decide the highest log $\Lambda$ output with grade $2$ by $GA_{v-1}$.

    \item \textbf{($t=t_v + 3\Delta$):} 
    Do nothing (other than what is required by the ongoing $GA_v$).
\end{enumerate}

\end{boxedminipage}
    \caption{TOB-SVD: Total-Order Broadcast protocol with one {phase} of voting {per new block} -- protocol for validator~$v_i$.}
\label{fig:tob1}
\end{figure}

\noindent\textbf{On the number of grades.} Our protocol requires $k=3$ grades to reach a decision with a single instance of the GA. Specifically, grade 2 enables an instant decision, which would otherwise necessitate two instances of GA, each with $k=2$ grades. For comparison, in constructions like 1/3MMR, validators first input their logs into the first GA. Afterward, they input the longest log output with grade 1 from the first GA into the second GA. The decision is then made from the grade 1 output of the second GA, which corresponds to the grade 2 output in our GA with $k=3$ grades.

It is important to note that the decision phase of our TOB protocol is solely focused on making decisions and nothing else. By omitting the decision phase, the protocol can be simplified, still maintaining one GA per decision. However, this modification would shift the protocol from ensuring \emph{deterministic} Safety to \emph{probabilistic} Safety. As a result, it would align more closely with a different line of research focused on probabilistically safe, dynamically available protocols, such as those presented in~\cite{goldfish, rlmd}. 
However, exploring probabilistically safe TOB protocols falls outside the scope of this work.

\section{Analysis}
\label{sec:analysis}
{In this section, we analyze the protocols in Section~\ref{sec:tob-svd}.}

\subsection{Graded Agreement with $k=2$ grades}
\label{sec:2valGA-analysis}
{We begin by analyzing the Graded Agreement protocol with $k = 2$ grades.}
Recall that this protocol works in the $(3\Delta, 0,\frac{1}{2})$-sleepy model and that honest validators which are awake during the output phase for grade 1, at time~$t=3\Delta$, only participate in it \emph{if they were also awake at time~$t=\Delta$}.

\begin{theorem}
\label{thm:correctness-ga1}
    The protocol in Figure~\ref{fig:ga-two-values} implements Graded Agreement with $k=2$ grades.
\end{theorem}

\begin{proof}
    
For the \emph{Consistency} property we want to show that no two honest validators~$v_i$ and~$v_j$ output conflicting logs~$\Lambda$ and~$\Lambda'$ with grade~$1$. Without loss of generality, say that $|V^{\Delta, j}_{\Lambda'} \cap V^{3\Delta, j}_{\Lambda'}| \le |V^{\Delta, i}_{\Lambda} \cap V^{3\Delta, i}_{\Lambda}|$. We first show that the sets of senders in $V^{\Delta, j}_{\Lambda'} \cap V^{3\Delta, j}_{\Lambda'}$ and $V^{\Delta, i}_{\Lambda} \cap V^{3\Delta, i}_{\Lambda}$ are disjoint, and moreover that they are both contained in $S^{3\Delta, j}$. 

Let $m = (\Lambda'', v_k)$ for some $\Lambda'' \succeq \Lambda $ and some validator~$v_k$, and say $m \in V^{\Delta, i} \cap V^{3\Delta, i}$. Since $m \in V^{\Delta, i}$, validator~$v_i$ forwards {message $\sign{\logmsg{\Lambda''}}{k}$} at time $\Delta$ and~$v_j$ receives {it} by time~$2\Delta$. Therefore, {either $m \in V^{3\Delta, j}$ or  $v_k \notin V^{3\Delta, j}$}, because receiving any {log other than $\Lambda''$ {from} $v_k$} would lead to~$v_j$ having equivocation evidence for~$v_k$. Since $m$ is for $\Lambda''$, extending $\Lambda$ and thus conflicting with $\Lambda'$, $V^{3\Delta, j}_{\Lambda'}$ does not include any log from~$v_k$. 
Hence, the senders of $V^{\Delta, j}_{\Lambda'} \cap V^{3\Delta, j}_{\Lambda'}$ and $V^{\Delta, i}_{\Lambda} \cap V^{3\Delta, i}_{\Lambda}$ are disjoint. Moreover, the senders of logs in $V^{\Delta, j}_{\Lambda'} \cap V^{3\Delta, j}_{\Lambda'}$ are by definition contained in $S^{3\Delta, j}$. Finally, the senders of logs {in} $V^{\Delta, i}_{\Lambda}$ are also all contained in $S^{3\Delta, j}$, because~$v_i$ forwards them at time $\Delta$ and, by time $2\Delta$, ~$v_j$ accepts either them or equivocation evidence for their senders. 

It then follows that $|S^{3\Delta, j}| \ge |V^{\Delta, i}_{\Lambda} \cap V^{3\Delta, i}_{\Lambda}| + |V^{\Delta, j}_{\Lambda'} \cap V^{3\Delta, j}_{\Lambda'}|$. Then, $|V^{\Delta, j}_{\Lambda'} \cap V^{3\Delta, j}_{\Lambda'}| \le |V^{\Delta, i}_{\Lambda} \cap V^{3\Delta, i}_{\Lambda}|$ implies $|S^{3\Delta, j}| \ge 2|V^{\Delta, j}_{\Lambda'} \cap V^{3\Delta, j}_{\Lambda'}|$, so $v_j$ does not output $(\Lambda', 1)$. 

For the \emph{Graded Delivery} property we show that an honest validator~$v_i$ outputting $(\Lambda, 1)$ implies that any honest validator~$v_j$ participating in the output phase for grade~$0$ outputs $(\Lambda, 0)$. At time~$\Delta$, validator~$v_i$ forwards all messages in $V^{\Delta, i}$. At time~$2\Delta$, validator~$v_j$ has equivocation evidence for the sender of any log in $V^{\Delta, i} \setminus V^{2\Delta, j}$, since otherwise such a log would also be contained in $V^{2\Delta, j}$. This equivocation evidence is forwarded by~$v_j$ and received by~$v_i$ by time~$3\Delta$, so the senders of logs in $V^{\Delta, i} \setminus V^{2\Delta, j}$ are all considered as equivocators by~$v_i$ then. This implies that $V^{\Delta, i} \setminus V^{2\Delta, j}$ and $V^{3\Delta, i}$ are disjoint, and thus so are $V^{\Delta, i} \cap V^{3\Delta, i}$ and $V^{\Delta, i} \setminus V^{2\Delta, j}$. Therefore, $V^{\Delta, i} \cap V^{3\Delta, i} \subseteq V^{2\Delta, j}$. Moreover, $S^{2\Delta, j} \subseteq S^{3\Delta, i}$, given that~$v_i$ forwards at least a \logmsgname{} message for each sender in $S^{2\Delta, j}$ by time~$2\Delta$, and by time~$3\Delta$ validator~$v_i$ receives either the forwarded \logmsgname{} messages or equivocation evidence for each sender. Validator~$v_i$ outputs~$(\Lambda, 1)$ if $|V^{\Delta, i}_{{\Lambda}} \cap V^{3\Delta, i}_{{\Lambda}}| > |S^{3\Delta, i}|/2$, in which case we also have $|V^{2\Delta, j}_{{\Lambda}}| \ge |V^{\Delta, i}_{{\Lambda}} \cap V^{3\Delta, i}_{{\Lambda}}| > |S^{3\Delta, i}|/2 \ge |S^{2\Delta, j}|/2$. Thus, if validator~$v_i$ outputs $(\Lambda, 1)$, validator~$v_j$ outputs $(\Lambda, 0)$.

For the \emph{Validity} property, consider an honest validator $v_i \in H_{\Delta} \cap H_{3\Delta}$ participating in the output phase for grade~1, i.e., $v_i$ awake at $t=\Delta$ and $t=3\Delta$. Suppose that all the validators $H_0 \setminus {B_{3\Delta}}$ (initially awake and honest throughout the protocol) send logs that extends $\Lambda$. Since such validators are honest throughout the protocol, they never equivocate, so $V^{\Delta, i}_{\Lambda}$ and $V^{3\Delta, i}_{\Lambda}$ contain all of these logs. 
Therefore, $|V^{\Delta, i}_{\Lambda} \cap V^{3\Delta, i}_{\Lambda}| \ge |H_0 \setminus {B_{3\Delta}}| = |H_0 \cup {B_{3\Delta}}| - |{B_{3\Delta}}| $.
Note that $H_0 \cup B_{3\Delta}$ contains all validators which might ever send a log during the whole protocol.
Then, due to Condition~\eqref{eq:sleepy} and $\rho = \frac{1}{2}$,   $|V^{\Delta, i}_{\Lambda} \cap V^{3\Delta, i}_{\Lambda}| > |S^{3\Delta, i}|/2$.
It follows that $v_i$ outputs $(\Lambda, 1)$. The argument for grade $0$ is nearly identical, and we omit it.

\emph{Integrity} follows from $\rho = \frac{1}{2}$ and the fact that outputting $\Lambda$ with any grade requires a majority of unique log senders to have sent a log that extends $\Lambda$ \emph{and to not have equivocated}. 

Consider an honest validator~$v_i$ which outputs $\Lambda$ with grade~$0$, after observing $|V^{2\Delta, i}_\Lambda| > |S^{2\Delta, i}|/2$ at time $2\Delta$, though no validator in~$H_0$ has sent a log $\Lambda' \succeq \Lambda$ {at time~$t=0$}. 
So, $V^{2\Delta, i}_{\Lambda}$ does not contain any log received from validators in $H_0$.
However, both the validators in $V^{2\Delta, i}_{\Lambda}$ and the ones in $H_0$ are counted as senders in $S^{2\Delta, i}$, and therefore, we have $|S^{2\Delta, i}| \ge |V^{2\Delta, i}_{\Lambda}| + |H_0|$.
Note that $|H_0 \cup B_{3\Delta}| \geq |S^{2\Delta, i}|$. 
Then, due to Condition~\eqref{eq:sleepy} and $\rho = \frac{1}{2}$,  $|H_0| \geq |H_0 \cup B_{3\Delta}| - |B_{3\Delta}| > |H_0 \cup B_{3\Delta}|/2 \ge |S^{2\Delta, i}|/2$.
Then, $|V^{2\Delta, i}_{\Lambda}| > |S^{2\Delta, i}|/2$ and $|S^{2\Delta, i}| \ge |V^{2\Delta, i}_{\Lambda}| + |H_0|$ together imply $|S^{2\Delta, i}| > |S^{2\Delta, i}|/2 + |H_0| > |S^{2\Delta, i}|/2 + |S^{2\Delta, i}|/2$, i.e., $|S^{2\Delta, i}| > |S^{2\Delta, i}|$, a contradiction. 
The argument for grade~$1$ is almost identical, and we omit it.

Finally, similarly to Integrity, \emph{Uniqueness} follows from the fact that outputting $\Lambda$ with any grade requires a majority of unique log senders to have sent a log that extends $\Lambda$, \emph{without counting logs received from equivocators}. This ensures that the sets of senders of logs which a validator counts in support of conflicting logs do not intersect, and so that a validator cannot see a majority for two conflicting logs.  We only go through the argument for grade $0$, because Uniqueness for grade $1$ is already implied by Consistency. Note first that there is a natural injection of $V^{2\Delta, i}_{\Lambda}$ into $S^{2\Delta, i}$, since at most one log per validator is considered in the former, due to removing logs from equivocators. Moreover, note that $V^{2\Delta, i}_{\Lambda'}$ and $V^{2\Delta, i}_{\Lambda}$ are disjoint for conflicting {$\Lambda$ and $\Lambda'$}, so $|V^{2\Delta, i}_{\Lambda'}| + |V^{2\Delta, i}_{\Lambda}| \leq |S^{2\Delta, i}|$. Therefore, $|V^{2\Delta, i}_{\Lambda}| > |S^{2\Delta, i}|/2$ implies $|V^{2\Delta, i}_{\Lambda'}| \leq |S^{2\Delta, i}|/2$ for any conflicting $\Lambda'$.
\end{proof}

\subsection{Graded Agreement with $k=3$ grades}
\label{sec:3valGA-analysis}
{We now analyze the Graded Agreement protocol with $k=3$ grades.}
Recall that this protocol works in the $(5\Delta, 0,\frac{1}{2})$-sleepy model. 

During the output phase for grade~$1$, honest validators that are awake at time $4\Delta$ only participate if they were also awake at time $2\Delta$. In the output phase for grade~$2$, validators awake at time $5\Delta$ will only participate if they were awake at time $\Delta$. This is because outputting a log with either grade~$1$ or~$2$ requires having \emph{previously} stored supporting logs for them, as per the time-shifted quorum technique.

\begin{theorem}
\label{thm:correctness-ga2}
    The protocol in Figure~\ref{fig:ga-three-values} implements Graded Agreement with $k=3$ grades.
\end{theorem}

\begin{proof}
The proof for all the properties of the Graded Agreement with $k=3$ grades are similar to those for the Graded Agreement with $k=2$ grades (Theorem~\ref{thm:correctness-ga1}). For this reason, we only discuss the \emph{Graded Delivery} property, which is where the key idea of the protocol is utilized, i.e., a nested application of the time-shifted quorum technique. Let $v_i$, $v_j$, and $v_k$ be three honest validators participating in the output phase for grade~$0$, grade~$1$, and grade~$2$, respectively. Time $[2\Delta, 4\Delta]$ in this GA functions exactly like time $[\Delta, 3\Delta]$ in the Graded Agreement with $k=2$ grades: at first, logs $V^{2\Delta}$ are stored for later use, then comes the output phase for grade~$0$, where a log $\Lambda$ is output with grade~$0$ if $|V^{3\Delta}_{\Lambda}| > |S^{3\Delta}|/2$, and finally comes the output phase for grade~$1$, where $\Lambda$ is output with grade~$1$ if $|V^{2\Delta}_{\Lambda} \cap V^{4\Delta}_{\Lambda}| > |S^{4\Delta}|/2$. In other words, for grade~$0$ and grade~$1$ we have a first application of the time-shifted quorum technique. For the same reasons as in the Graded Agreement with $k=2$ grades, we then have that $|V^{2\Delta, j}_\Lambda \cap V^{4\Delta, j}_\Lambda| \le |V^{3\Delta,i}_\Lambda|$ and $|S^{3\Delta, i}| \le |S^{4\Delta,j}|$, which guarantees the Graded Delivery property from grade~$1$ to grade~$0$. This application of the time-shifted quorum technique is \emph{nested inside another such application}, which guarantees the Graded Delivery property from grade~$2$ to grade~$1$. Firstly, the participation level for grade~$2$ outputs, i.e., $S^{5\Delta}$, is determined time $\Delta$ \emph{after} that for grade~$1$ outputs, $S^{4\Delta}$, which ensures the correct inclusion, i.e., $S^{4\Delta, j} \subseteq S^{5\Delta, k}$.  Conversely, $V^\Delta_\Lambda$, the initial supporting logs for grade~$2$ outputs, are determined time $\Delta$ \emph{before} $V^{2\Delta}_\Lambda$, those for grade~$1$ outputs. Finally, the sets of equivocating senders, whose votes are discarded from the supporting logs, are determined in the same order as the participation level, to ensure that any sender which is considered an equivocator in output phase of grade~$1$ is also considered an equivocator in the output phase for grade~$2$. Together, these last two points guarantee that $V^{\Delta, k}_{\Lambda} \cap V^{5\Delta, k}_{\Lambda} \subseteq V^{2\Delta, j}_{\Lambda} \cap V^{4\Delta, j}_{\Lambda}$, so that $|V^{\Delta, k}_{\Lambda} \cap V^{5\Delta, k}_{\Lambda}| > |S^{5\Delta, k}|/2$ implies $|V^{2\Delta, j}_{\Lambda} \cap V^{4\Delta, j}_{\Lambda}| \ge |V^{\Delta, k}_{\Lambda} \cap V^{5\Delta, k}_{\Lambda}| > |S^{5\Delta, k}|/2 \ge |S^{4\Delta, j}|/2$, i.e., validator~$v_k$ outputting $(\Lambda, 2)$ implies validator~$v_j$ outputting $(\Lambda, 1)$.
\end{proof}

\subsection{TOB-SVD}
\label{sec:tob-svd-analysis}

{We conclude by analyzing our main result, TOB-SVD.}

\begin{theorem}
\label{thm:correctness-tob1}
    The protocol in Figure~\ref{fig:tob1} implements Total-Order Broadcast.
\end{theorem}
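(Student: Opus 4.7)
The plan is to derive safety and liveness from the graded delivery, consistency, and validity properties of $GA_v$ (Theorem~\ref{thm:correctness-ga2}), combined with the sleepy-model inequality with $T_s = 2\Delta$. Both arguments rest on the decide-lock-candidate chain outlined in Section~\ref{sec:outline}. The key lemma to establish is that if some honest validator decides a log $\Lambda$ in view $v$, then every honest validator that participates in the voting phase of view $v$ inputs a log extending $\Lambda$ to $GA_v$. The decision stems from some honest output $(\Lambda, 2)$ of $GA_{v-1}$; graded delivery then forces every honest participant in the grade-$1$ output phase of $GA_{v-1}$ to output $(\Lambda, 1)$, while consistency (which implies uniqueness at grade $1$) rules out any conflicting grade-$1$ output. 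The voter's lock $L_{v-1}$, being the highest such output, therefore extends $\Lambda$, and by the voting rule the voter inputs either $L_{v-1}$ itself or a proposal extending $L_{v-1}$, in either case an extension of $\Lambda$.

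Safety will follow by induction on views. Suppose $\Lambda$ is decided in view $v$; by the lemma and the stability assumption $T_s = 2\Delta$, every honest voter stable in the window $[t_v - \Delta, t_v + \Delta]$ inputs an extension of $\Lambda$ to $GA_v$, so the validity hypothesis for $GA_v$ is satisfied with respect to $\Lambda$ (the sleepy-model inequality guarantees that these stable voters outnumber the Byzantine ones). Validity then gives that every honest grade-$0$, grade-$1$, and grade-$2$ output of $GA_v$ extends $\Lambda$, so every candidate, lock, and decision formed in view $v+1$ extends $\Lambda$; iterating this, the property propagates to all later views. Two decisions made within the same view are compatible by consistency of $GA_{v-1}$ at grade $2$, and two decisions made in different views are compatible by the induction, which completes safety.

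For liveness, I would combine the VRF-based leader election with the stability assumption. Eventually there is, with overwhelming probability, a view $v$ in which the highest-VRF proposer is honest and in which the mildly adaptive corruption model prevents the adversary from corrupting that leader in time to equivocate. By the decide-lock-candidate chain applied to $GA_{v-1}$, the honest leader's proposal, which extends the leader's own candidate, also extends every honest stable voter's lock, so every such voter inputs precisely this proposal to $GA_v$. Validity of $GA_v$ then causes every honest participant in the grade-$1$ and grade-$2$ output phases to output the proposal with the respective grade, and it is decided in view $v+1$. Since every pending valid transaction is eventually included in such a proposal from an honest leader, liveness follows.

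The main obstacle will be careful bookkeeping around which honest validators are actually eligible to perform each action (voting, deciding, and participating in the various GA output phases). The safety inductive step in particular must phrase the validity invocation in terms of honest validators stable for $2\Delta$ rather than all honest-awake ones, since honest validators that have just woken up legitimately skip voting and deciding; the sleepy-model inequality with $T_s = 2\Delta$ is precisely what guarantees that this restricted set is large enough for the GA's validity conclusion to go through, and tracking this restricted set consistently across the overlapping $GA_{v-1}$ and $GA_v$ invocations is where the bulk of the formal work lies.
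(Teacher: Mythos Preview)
Your proposal is correct and follows essentially the same approach as the paper: the paper proves safety via an inductive lemma (Lemma~\ref{lem:tob1-induction}) stating that once all grade-$1$ participants of $GA_{v-1}$ output $(\Lambda,1)$ this propagates to all later views, and proves liveness via a ``good leader'' lemma (Lemma~\ref{lem:tob1-leader-followed}) showing that the leader's proposal extends every voter's lock, followed by a reorg-resilience theorem. Your decomposition phrases the invariant in terms of honest \emph{inputs} to $GA_v$ rather than grade-$1$ \emph{outputs} of $GA_{v-1}$, but these are equivalent since the voters are exactly the grade-$1$ participants; the decide-lock-candidate chain, the use of graded delivery plus uniqueness, and the validity-based induction are all the same, and your final paragraph correctly identifies the one real subtlety (matching the $T_s=2\Delta$ stability window to the set of validators eligible to input to each $GA_v$), which the paper handles in the paragraph preceding Lemma~\ref{lem:tob1-induction}.
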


In order to prove this theorem, we first ensure that the properties of our Graded Agreement primitive (Figure~\ref{fig:ga-three-values}) hold when employing it within the TOB-SVD protocol (Figure~\ref{fig:tob1}). 

Observe that, due to the assumed stabilization period of~$2\Delta$, our Total-Order Broadcast works in the $(5\Delta, 2\Delta, \frac{1}{2})$-sleepy model, i.e., when $|B_{t + 5\Delta}| <  |H_{t - 2\Delta, t} \cup B_{t + 5\Delta}| / 2$  holds for every time~$t \ge 0$. On the other hand, the Graded Agreement protocol itself  (Figure~\ref{fig:ga-three-values}) works in the $(5\Delta, 0, \frac{1}{2})$-sleepy model, requiring $|B_{t + 5\Delta}| <  |H_{t, t} \cup B_{t + 5\Delta}| / 2$, without any stability assumption ($T_s = 0$). When invoking it as part of our Total-Order Broadcast, the only change is in the input phase, since honest validators use the outputs of $GA_{v-1}$ to determine the inputs to~$GA_{v}$. In particular, validators in $H_{t_v + \Delta}$ input something to $GA_v$ if and only if they have a \emph{lock} $L_{v-1}$, such that they output $L_{v-1}$ with grade~$1$ in $GA_{v-1}$. For that to be the case, they must have participated in the output phase for grade~$1$ of $GA_{v-1}$, which requires them to also have been awake at time $t_{v-1} + 3\Delta = t_v - \Delta$. The additional stability requirement of $T_s = 2\Delta$ in the Total-Order Broadcast model takes care of this, ensuring that we only consider those honest validators which are allowed to participate in the input phase of a~$GA_v$, i.e., validators in $H_{t-2\Delta, t}$, for $t = t_v + \Delta$.

\begin{lemma}
\label{lem:tob1-induction}
If all honest validators participating in the output phase for grade~$1$ of $GA_{v-1}$ output~$(\Lambda, 1)$
then, for any view $v' \ge v-1$, all honest validators participating in the output phase for grade~$1$ of $GA_{v'}$ output~$(\Lambda, 1)$.

\end{lemma}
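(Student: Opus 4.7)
The plan is to prove the lemma by induction on $v'$, with the base case $v' = v-1$ being exactly the hypothesis. For the inductive step, assume all honest validators participating in the output phase for grade~$1$ of $GA_{v'}$ output $(\Lambda, 1)$, and show the same for $GA_{v'+1}$. The strategy is to argue that every honest validator that inputs anything to $GA_{v'+1}$ inputs an extension of $\Lambda$, and that the set of such validators is large enough to trigger the \emph{validity} property of the Graded Agreement with three grades (Theorem~\ref{thm:correctness-ga2}).

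First I would identify which validators contribute inputs to $GA_{v'+1}$: by the protocol, a validator inputs at time $t_{v'+1} + \Delta$ only if it has a lock $L_{v'}$ from $GA_{v'}$, which requires it to have participated in the output phase for grade~$1$ of $GA_{v'}$ (and so to have been awake at $t_{v'} + 3\Delta = t_{v'+1} - \Delta$). By the inductive hypothesis, such a validator output $(\Lambda, 1)$ in $GA_{v'}$, and by consistency at grade~$1$ (which, as the paper notes, implies uniqueness for $g > 0$) all its grade-$1$ outputs are pairwise compatible with $\Lambda$. The ``highest'' such output, $L_{v'}$, is therefore either $\Lambda$ itself or a proper extension. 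Consequently, whether the validator inputs $L_{v'}$ or a proposal extending $L_{v'}$, the input extends $\Lambda$.

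Next I would invoke the stability assumption. The TOB operates in the $(\infty, 5\Delta, 2\Delta, 5\Delta, 1)$-sleepy model, so at time $t = t_{v'+1} + \Delta$ we have $h_{t - 2\Delta,\, t,\, t+5\Delta} > f_{t+5\Delta}$. Every validator in $H_{t_{v'+1} - \Delta,\, t_{v'+1} + \Delta,\, t_{v'+1} + 6\Delta}$ is awake at both $t_{v'+1} - \Delta$ and $t_{v'+1} + \Delta$, hence does participate in the output phase for grade~$1$ of $GA_{v'}$ and does input to $GA_{v'+1}$, with an extension of $\Lambda$ by the previous paragraph. Applied at the ``time $0$'' of $GA_{v'+1}$, this is exactly the hypothesis needed for the validity argument of Theorem~\ref{thm:correctness-ga2}: the initially-awake, throughout-honest validators of $GA_{v'+1}$ all input an extension of $\Lambda$, and they outnumber $f_{t_{v'+1}+6\Delta}$, so the majority count used in the validity proof goes through. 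Hence every honest validator participating in the output phase for grade~$1$ of $GA_{v'+1}$ outputs $(\Lambda, 1)$, closing the induction.

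I expect the main subtlety to be the careful accounting of which honest validators are guaranteed to input an extension of $\Lambda$, rather than the algebra. Specifically, one must check that the stability slack of $T_s = 2\Delta$ in the TOB's sleepy model matches precisely the gap between the wake-up requirement for participating in the grade-$1$ output phase of $GA_{v'}$ (awake at $t_{v'+1} - \Delta$) and the time at which $GA_{v'+1}$ begins (at $t_{v'+1} + \Delta$), so that the validators furnished by the sleepy-model inequality are exactly those whose locks we can control via the inductive hypothesis. Once this alignment is verified, the rest of the argument is a direct appeal to consistency/uniqueness at grade~$1$ and to validity of the three-grade GA.
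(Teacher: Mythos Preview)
Your proposal is correct and follows essentially the same approach as the paper: use uniqueness at grade~$1$ to show the lock $L_{v'}$ extends $\Lambda$, observe that the validators inputting to $GA_{v'+1}$ are precisely those participating in the grade-$1$ output phase of $GA_{v'}$, and then apply validity to carry the induction. The paper's proof is terser and handles the stability accounting (the $T_s = 2\Delta$ alignment you spell out) in the discussion preceding the lemma rather than inside it, but the structure and key ingredients are identical.
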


\begin{proof} 
    By the Uniqueness property of the Graded Agreement, any honest validator~$v_i$ that outputs {a log extending}~$\Lambda$ with grade~$1$ in~$GA_{v-1}$ does not output any log conflicting with~$\Lambda$ with grade~$1$. This means that the lock $L_{v-1}$ of~$v_i$ extends $\Lambda$. Therefore, every honest validator inputs to $GA_v$ a log extending~{$\Lambda$}. Since the honest validators participating in the output phase for grade~$1$ of $GA_{v-1}$ exactly correspond to those that input something to $GA_v$, we can apply the Validity property of $GA_v$ and conclude that all honest validators that participate in the output phase for grade~$1$ of~$GA_v$ output $(\Lambda, 1)$. By induction, this then holds for all views~$v' \ge v-1$.
\end{proof}

\begin{theorem}[Safety]
    The protocol implemented in Figure~\ref{fig:tob1} satisfies Safety.
\end{theorem}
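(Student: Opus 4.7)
The plan is to show that any two logs decided by honest validators must be compatible, by propagating grade-1 outputs across views and invoking uniqueness. Suppose, without loss of generality, that honest validators~$v_i$ and~$v_j$ decide~$\Lambda$ and~$\Lambda'$ in views~$v \le v'$ respectively. By construction of the protocol in Figure~\ref{fig:tob1}, this means $v_i$ outputs~$(\Lambda, 2)$ in~$GA_{v-1}$ and $v_j$ outputs~$(\Lambda', 2)$ in~$GA_{v'-1}$. The goal is to conclude that~$\Lambda$ and~$\Lambda'$ are compatible.

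First I would apply graded delivery of~$GA_{v-1}$ to~$v_i$'s grade-2 output: every honest validator that participates in the grade-1 output phase of~$GA_{v-1}$ outputs~$(\Lambda, 1)$. I would then invoke Lemma~\ref{lem:tob1-induction} (starting from view~$v-1$ and iterating up to~$v'-1$) to deduce that every honest validator participating in the grade-1 output phase of~$GA_{v'-1}$ also outputs~$(\Lambda, 1)$. Independently, applying graded delivery of~$GA_{v'-1}$ to~$v_j$'s grade-2 output~$(\Lambda', 2)$, every honest validator participating in the grade-1 output phase of~$GA_{v'-1}$ outputs~$(\Lambda', 1)$. Hence, any honest participant in that phase outputs both~$(\Lambda, 1)$ and~$(\Lambda', 1)$, so uniqueness for grade~$1$ forces~$\Lambda$ and~$\Lambda'$ to be compatible.

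The one subtlety that I would need to address carefully is the existence of at least one such honest participant in the grade-1 output phase of~$GA_{v'-1}$, which occurs at time~$t_{v'} + \Delta$ and requires the validator to have been awake already at time~$t_{v'-1} + 3\Delta = t_{v'} - \Delta$, i.e., over an interval of length~$2\Delta$. This is exactly the reason for the stability parameter~$T_s = 2\Delta$ in the $(\infty, 5\Delta, 2\Delta, 5\Delta, 1)$-sleepy model used by the protocol: instantiating the sleepy condition at~$t = t_{v'} + \Delta$ gives~$h_{t - 2\Delta, t, t+5\Delta} > f_{t+5\Delta} \ge 0$, so the set of honest validators awake throughout~$[t_{v'-1} + 3\Delta, t_{v'} + \Delta]$ and still honest through the end of~$GA_{v'-1}$ is non-empty, which is precisely the set of honest participants in the grade-1 output phase we need.

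I expect this existence step, together with checking that Lemma~\ref{lem:tob1-induction} can be iterated across possibly many views (including the degenerate case~$v = v'$, where the argument collapses to a single application of consistency in~$GA_{v-1}$), to be the only delicate points; the rest is a direct combination of graded delivery, uniqueness and the inductive lemma already established.
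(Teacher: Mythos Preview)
Your proposal is correct and follows essentially the same approach as the paper: graded delivery from a grade-$2$ output to grade~$1$, Lemma~\ref{lem:tob1-induction} to carry grade-$1$ outputs forward across views, then uniqueness at grade~$1$ in $GA_{v'-1}$ to force compatibility. The paper's own proof labels the step from grade~$2$ to grade~$1$ as ``consistency'' rather than ``graded delivery'', but the property actually invoked is the one you name; your added observation that the sleepy condition at $t=t_{v'}+\Delta$ guarantees a nonempty set of honest participants in the grade-$1$ phase of $GA_{v'-1}$ makes explicit a point the paper leaves implicit.
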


\begin{proof}
    Suppose an honest validator~$v_i$ decides log $\Lambda$ at time $t_v + 2\Delta$ by outputting $(\Lambda, 2)$ in $GA_{v-1}$. By the {Graded Delivery} property, any honest validator participating in the output phase for grade~$1$ of $GA_{v-1}$ outputs~$(\Lambda, 1)$. By Lemma~\ref{lem:tob1-induction}, for any $v' \ge v-1$, honest validators participating in the output phase for grade~$1$ of $GA_{v'}$ output $(\Lambda, 1)$. Now, suppose that another honest validator~$v_j$ decides a conflicting~$\Lambda'$ and, without loss of generality, let us assume that~$v_j$ does so during view $v'' \ge v$. Again, by the {Graded Delivery} property, every honest validator which participates in the output phase of $GA_{v''-1}$ outputs $(\Lambda', 1)$. Since $v''-1 \ge v-1$ we have shown that any such validator also outputs $(\Lambda, 1)$, contradicting the Uniqueness property of Graded Agreement.
\end{proof}

\begin{lemma}
\label{lem:good-leader-probability}
    Any view has a good leader with probability greater than $\frac{1}{2}$.
\end{lemma}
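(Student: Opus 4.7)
The plan is to show that the probability comes down to a simple size comparison between two sets of validators, and then to invoke the sleepy model inequality to conclude that the ``good'' set strictly outnumbers the ``bad'' one, giving a probability strictly above $1/2$.

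First I would identify precisely the candidate pool and the favorable subset. Since the VRF primitive produces independent, continuously-distributed values, ties occur with probability zero, and the validator achieving the maximum VRF value over any set is uniformly distributed in that set. So I would let $A = H_{t_v} \cap \overline{B_{t_v + \Delta}}$ (the potential good leaders) and $B = B_{t_v + \Delta}$ (the Byzantine validators who may broadcast a proposal heard by time $t_v+\Delta$). Observe that $A$ and $B$ are disjoint by construction, and that $H_{t_v} \cup B_{t_v + \Delta} = A \cup B$, because any element of $H_{t_v}$ is either in $\overline{B_{t_v+\Delta}}$ (hence in $A$) or in $B_{t_v+\Delta}$ (hence in $B$). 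Therefore the probability that the highest VRF value among $H_{t_v}\cup B_{t_v+\Delta}$ belongs to a validator in $A$ equals $|A|/(|A|+|B|)$, which exceeds $1/2$ precisely when $|A| > |B|$.

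Next I would establish $|A| > |B|$ using the sleepy-model condition of the protocol. Instantiating Equation~\eqref{eq:sleepy} at time $t_v$ in the $(\infty, 5\Delta, 2\Delta, 5\Delta, 1)$-sleepy model gives $h_{t_v - 2\Delta, t_v, t_v + 5\Delta} > f_{t_v + 5\Delta}$. Since $B_t$ is monotonically non-decreasing (because $T_f=\infty$), we have $H_{t_v - 2\Delta, t_v, t_v + 5\Delta} \subseteq H_{t_v} \cap \overline{B_{t_v + 5\Delta}} \subseteq H_{t_v} \cap \overline{B_{t_v + \Delta}} = A$ and $B = B_{t_v + \Delta} \subseteq B_{t_v+5\Delta}$. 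Chaining the inequalities yields $|A| \ge h_{t_v - 2\Delta, t_v, t_v + 5\Delta} > f_{t_v + 5\Delta} \ge |B|$, as required.

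Combining the two steps, the probability that view $v$ has a good leader equals $|A|/(|A|+|B|) > 1/2$. I do not anticipate any real obstacle here: the only subtle point is making sure the candidate pool $H_{t_v}\cup B_{t_v+\Delta}$ is the correct one (it is, by the mildly-adaptive corruption assumption, since a proposal sent at $t_v$ can be heard by an honest voter at $t_v+\Delta$ only from validators awake at $t_v$ or adversarial by $t_v+\Delta$), and that the stability window $T_s=2\Delta$ is exactly what allows us to use $h_{t_v-2\Delta, t_v, t_v+5\Delta}$ as a lower bound on $|A|$.
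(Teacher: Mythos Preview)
Your proposal is correct and follows essentially the same approach as the paper: both identify the favorable set $A=H_{t_v}\cap\overline{B_{t_v+\Delta}}$ inside the pool $H_{t_v}\cup B_{t_v+\Delta}$, use the sleepy-model inequality (with the obvious inclusions from $T_s=2\Delta$ and monotonicity of $B_t$) to get $|A|>|B_{t_v+\Delta}|$, and conclude the probability is $|A|/|H_{t_v}\cup B_{t_v+\Delta}|>1/2$. The one place where the paper is slightly more explicit is in invoking the mildly-adaptive assumption to argue that $B_{t_v+\Delta}$ is fixed \emph{before} the adversary sees any VRF values in $H_{t_v}$, which is what justifies treating the maximum as uniform over the pool; you mention the mildly-adaptive assumption but tie it to the shape of the candidate pool rather than to this independence point, so it would be worth making that role explicit.
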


\begin{proof}
    Observe that, due to Condition~\eqref{eq:sleepy}, $\rho = \frac{1}{2}$ and $B_{t_v} \subseteq B_{t_v + 5\Delta}$, $|H_{t_v} \setminus {B_{t_v + \Delta}}| \geq |H_{t_v} \setminus {B_{t_v + 5\Delta}}| = |H_{t_v} \cup {B_{t_v + 5\Delta}}| - |{B_{t_v + 5\Delta}}|  > |H_{t_v} \cup {B_{t_v + 5\Delta}}|/2 \geq |H_{t_v} \cup {B_{t_v + \Delta}}|/2$.
    View $v$ has a good leader whenever a validator in $H_{t_v} \setminus {B_{t_v + \Delta}}$
    has the highest VRF value for view $v$ out all validators in $H_{t_v} \cup B_{t_v + \Delta}$. 
    The adversary is mildly adaptive, so corruptions which happen by time $t_v + \Delta$ must have been scheduled by time $t_v$. In particular, the adversary has to determine $H_{t_v} \cap B_{t_v + \Delta}$ \emph{before} observing any of the VRF values of validators in $H_{t_v}$. Therefore, view $v$ has a good leader with probability $\frac{|H_{t_v} \setminus {B_{t_v + \Delta}}|}{|H_{t_v} \cup B_{t_v + \Delta}|} > \frac{1}{2}$.
\end{proof}

\begin{lemma}
\label{lem:tob1-leader-followed}
    If view $v$ has a good leader~$v_{\ell}$ and~$v_{\ell}$ proposes a log $\Lambda$, then all honest validators participating in the output phase for grade~$1$ of $GA_{v-1}$ input~$\Lambda$ to $GA_v$.
\end{lemma}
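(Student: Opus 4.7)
The plan is to establish two facts which together, via the voting rule of Figure~\ref{fig:tob1}, imply the conclusion: (i) the good leader's proposal~$\Lambda$ extends the lock~$L^j_{v-1}$ of every honest validator~$v_j$ participating in the grade~$1$ output phase of~$GA_{v-1}$; and (ii) among non-equivocating proposals that any such~$v_j$ receives by time~$t_v + \Delta$ and which extend~$L^j_{v-1}$, the one from~$v_\ell$ has the strictly highest VRF value. Given~(i) and~(ii), the voting rule forces each such~$v_j$ to select~$\Lambda$ to input into~$GA_v$.

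For~(i), I would combine graded delivery with uniqueness of the underlying three-grade GA (Theorem~\ref{thm:correctness-ga2}). Being in $H_{t_v}$, $v_\ell$ is awake at the grade~$0$ output phase of~$GA_{v-1}$, which occurs at time $t_{v-1} + 4\Delta = t_v$, and so participates in it. For each~$v_j$ that outputs a lock~$L^j_{v-1}$ with grade~$1$ in~$GA_{v-1}$, graded delivery forces $v_\ell$ to output $(L^j_{v-1}, 0)$ as well. Uniqueness for grade~$0$ then ensures that the grade~$0$ outputs of~$v_\ell$ are pairwise compatible, hence form a chain, so the highest one --- $v_\ell$'s candidate --- extends every such~$L^j_{v-1}$. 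Since~$\Lambda$ is by construction an extension of the candidate, $\Lambda$ extends every lock.

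For~(ii), I would combine the definition of a good leader with message-delivery guarantees of the synchronous sleepy model. Because $v_\ell \in H_{t_v} \cap \overline{B_{t_v + \Delta}}$, $v_\ell$ remains honest throughout $[t_v, t_v + \Delta]$ and therefore broadcasts a single, non-equivocating proposal at time~$t_v$ that reaches every validator awake at time $t_v + \Delta$ by then --- including every~$v_j$ in question. Conversely, every proposal~$v_j$ receives by $t_v + \Delta$ originates from a validator in $H_{t_v} \cup B_{t_v + \Delta}$: an honest proposer must be awake at~$t_v$, and any Byzantine sender of a message delivered by $t_v + \Delta$ must already be corrupted by then. By the very definition of a good leader, $v_\ell$'s VRF value is maximal over this set, and therefore strictly dominates the VRF of every non-equivocating received proposal.

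The step I expect to be the main obstacle is~(i), specifically handling the fact that distinct validators in the grade~$1$ output phase of~$GA_{v-1}$ may hold different locks. One must show that the leader's single proposal is simultaneously compatible with each of them, which is exactly what the interaction of graded delivery from grade~$1$ to grade~$0$ and uniqueness for grade~$0$ is designed to accomplish. Once these GA properties are in hand, the remainder of the argument amounts to careful bookkeeping of who is awake and honest at which times in the sleepy model.
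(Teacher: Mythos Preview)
Your proposal is correct and follows essentially the same approach as the paper's own proof: establish that the leader's proposal extends every honest lock via the grade-$1$-to-grade-$0$ GA guarantee together with uniqueness, and then argue via the definition of a good leader that the proposal is received, non-equivocating, and carries the maximal VRF value. One minor remark: the paper's proof invokes the ``consistency'' property where you invoke graded delivery, but since the step in question goes from a grade-$1$ output of $v_j$ to a grade-$0$ output of $v_\ell$, your naming is the accurate one---the paper's use of ``consistency'' there is a slip.
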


\begin{proof}
    Consider any such honest validator~$v_i$ and its lock $L_{v-1}$, which~$v_i$ outputs with grade~$1$ in $GA_{v-1}$.
    {Note that by the Validity property and the fact that any log is an extension of $\genesis$, any honest validator participating in the output phase for grade~$1$ outputs some log extending $\genesis$.}
    As the leader~$v_{\ell}$ is honest and awake at time~$t_v$, by the Graded Delivery {property} of Graded Agreement,~$v_{\ell}$ outputs~$(L_{v-1}, 0)$ in $GA_{v-1}$ , and, by the Uniqueness property, it does not output any conflicting log with grade~$0$. This means that the proposal $\Lambda$ made by leader~$v_{\ell}$ extends~$L_{v-1}$. The proposal is received by validator~$v_i$ by time~$t_v + \Delta$, and no other proposal from $v_{\ell}$ is received by~$v_i$ at that point, because the leader is still honest at time $t_v + \Delta$, since $v_{\ell} \not \in B_{t_v + \Delta}$ by definition of a good leader. Moreover, no other proposal received by~$v_i$ at this point has higher VRF value, since a good leader for view $v$ has the highest VRF value among all validators from which a proposal might have been received by ($H_{t_v} \cup B_{t_v + \Delta})$. Therefore, validator~$v_i$ inputs $\Lambda$ to $GA_{v}$.
\end{proof}

\begin{lemma}
\label{lem:tob1-reorg-resilience}
In the protocol in Figure~\ref{fig:tob1}, {if view $v$ has a good leader~$v_{\ell}$ and~$v_{\ell}$ proposes a log $\Lambda$, then any honest validator that after $t_{v+1} - 2\Delta$ is eventually awake for at least $8\Delta$ decides a log extending $\Lambda$.}
\end{lemma}

\begin{proof}
    Suppose view $v$ has a good leader, which proposes log $\Lambda$. Then, by Lemma~\ref{lem:tob1-leader-followed}, all honest validators which participate in the output phase for grade $1$ of $GA_{v-1}$ input $\Lambda$ to $GA_v$. By the Validity property of Graded Agreement, all validators which participate in the output phase for grade $1$ of $GA_v$ output $(\Lambda, 1)$. By Lemma~\ref{lem:tob1-induction}, this also holds in $GA_{v'}$ for all $v' \ge v$. Since all such validators output $(\Lambda, 1)$, they also input an extension of $\Lambda$ to $GA_{v'+1}$, for all $v' \ge v$.
    Again by the Validity property of Graded Agreement, any honest validator which participates in the output phase for grade~$2$ of one such {$GA_{v'}$}, i.e., any honest validators awake both at {$t_{v'+1} - 2\Delta$ and $t_{v'+1} + 2\Delta$}, decides {a log extending}~$\Lambda$.
    {Any honest validator that after $t_{v+1} - 2\Delta$ is eventually awake for $8\Delta$ is guaranteed to be awake at both $t_{v''} - 2\Delta$ and $t_{v''} + 2\Delta$ for some $v'' \geq v+1$ and therefore it decides a log extending~$\Lambda$.} 
\end{proof}

\begin{theorem}[Liveness]
    The protocol implemented in Figure~\ref{fig:tob1} satisfies Liveness.
\end{theorem}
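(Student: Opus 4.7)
The plan is to reduce liveness to the reorg resilience of Theorem~\ref{thm:tob1-reorg-resilience} together with the probabilistic guarantee of Lemma~\ref{lem:good-leader-probability}. Fix a valid transaction $\mathit{tx}$ submitted at time $t_0$, and let $v_0$ be the first view with $t_{v_0} \ge t_0$. Since a good leader for a view $v$ is by definition in $H_{t_v}$, i.e., honest and awake at the proposal time, it will include any valid pending transaction (in particular $\mathit{tx}$) in its proposal, extending its grade~$0$ output of $GA_{v-1}$. Call such a view \emph{successful}.

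First I would argue that, almost surely, some view $v^* \ge v_0$ is successful. By Lemma~\ref{lem:good-leader-probability}, conditioned on the history up to the start of any view $v$, the probability that view $v$ has a good leader is strictly greater than $1/2$; crucially, this bound does not depend on the history, because the adversary must commit to which validators in $H_{t_v}$ are corrupted by time $t_v + \Delta$ \emph{before} observing the fresh VRF values for view $v$. A standard Borel--Cantelli style argument on the filtration of view histories then yields that infinitely many views after $v_0$ have a good leader, hence are successful, with probability $1$; quantitatively, the first successful view is reached within $O(\log(1/\varepsilon))$ views except with probability~$\varepsilon$.

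Second I would apply Theorem~\ref{thm:tob1-reorg-resilience} to the successful view $v^*$. Let $\Lambda$ be the proposal of its good leader, so $\mathit{tx} \in \Lambda$. By reorg resilience, every honest validator that is eventually continuously awake for an interval of at least $8\Delta$ decides $\Lambda$. Setting $t = t_{v^*}$, every honest validator awake for long enough after $t$ delivers a log $\Lambda$ containing $\mathit{tx}$, which is exactly the liveness statement.

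The main obstacle is the probabilistic bookkeeping used in the first step: one has to be careful about what "independent across views" means, because long-lived adaptive corruptions correlate the good-leader events across views. The right formulation is the conditional one above, relying on the fact that Lemma~\ref{lem:good-leader-probability} gives a uniform lower bound of $1/2$ on the per-view good-leader probability regardless of prior history, which is exactly what the mildly adaptive corruption model guarantees through the $\Delta$-delay between scheduling and executing a corruption. Everything else is a direct invocation of previously established results.
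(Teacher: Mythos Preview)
Your proposal is correct and takes essentially the same approach as the paper: combine Lemma~\ref{lem:good-leader-probability} (each view has a good leader with probability $>1/2$) with Theorem~\ref{thm:tob1-reorg-resilience} (reorg resilience) to conclude that a good leader's proposal is eventually decided. Your write-up is more detailed than the paper's two-sentence proof, in particular making explicit the transaction-inclusion step and the conditional-probability reasoning across views, but the structure is the same.
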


\begin{proof}
    Take a valid transaction $\mathit{tx}$ {in the pool of transactions.}
    By Lemma~\ref{lem:good-leader-probability} and our assumption on transaction batching by honest validators, there exists a view $v$ with an honest leader which proposes a log $\Lambda$ that includes $\mathit{tx}$.
    As per Lemma~\ref{lem:tob1-reorg-resilience}, $\mathit{tx}$ is included in the log decided by any honest validator that after $t_{v+1} - 2\Delta$ is eventually awake for at least $8\Delta$.
\end{proof}

\section{Conclusions}
\label{sec:conclusions}

We introduce TOB-SVD, a novel Total-Order Broadcast protocol that supports dynamic participation and operates within (a variant of) the sleepy model. TOB-SVD tolerates up to 1/2 Byzantine validators and improves both expected latency and transaction latency compared to existing protocols with the same resilience, requiring only one voting phase in the best case and two in expectation. This makes it significantly more practical for large-scale networks like permissionless blockchains. The protocol is also notably simpler, at the cost of an added \(2\Delta\) stabilization time on top of the time required by other protocols to retrieve information upon waking. Therefore, in scenarios where a large amount of information must be retrieved, such as in blockchain systems, our protocol requires only a slightly stronger assumption on the overall stabilization time.

\bibliography{main}
\bibliographystyle{plainurl}

\end{document}